%% file: main.tex
\documentclass[11pt]{article}

\usepackage[letterpaper, margin=1in]{geometry}
\usepackage[T1]{fontenc}
\usepackage{natbib}
\usepackage[colorlinks=true, allcolors=blue]{hyperref}

\input{packages}

\usepackage[]{shortcuts}

\title{Polylogarithmic Collective Tree Exploration}
\author{Romain Cosson \\ New York University \and Laurent Massoulié \\ Inria, Paris}
\date{}

\begin{document}
\maketitle
\begin{abstract}
We study asynchronous collective tree exploration, where $k$ agents with unrestricted communication start at the root of an unknown tree and discover edges online. At each step, an adversary chooses which agent moves. We give a deterministic algorithm that explores any tree with $n$ nodes and depth $D$ in at most
\[
2n+O\!\left(k\log^2(k)D\right)
\]
moves, matching known lower bounds up to a constant factor. As a direct consequence, we obtain a near-optimal competitive ratio of $O(\log^2 k)$ for synchronous collective tree exploration, where all agents move at each round. The proof relies on a multiscale power regularizer that may be of independent interest.
\end{abstract}


\section{Introduction}\label{sec:introduction}
Consider a team of $k$ agents initially located at the root of an unknown tree. The agents can communicate without restriction, but they discover edges online: specifically, an edge is revealed to the team when an agent becomes adjacent to it. How should they coordinate to visit all nodes as fast as possible? In other words, \textit{how to parallelize tree exploration}?

In \textit{asynchronous collective tree exploration} (ACTE), introduced in \cite{cosson2024breaking}, the agents' speeds are controlled by an adversary. Specifically, at each step, the adversary selects one agent, and the algorithm chooses an adjacent edge for that agent to traverse. The activation sequence is arbitrary, and an algorithm's performance is measured by the maximum number of moves needed to explore the tree. The goal is to obtain a guarantee of the form 
\[
2n + O(r(k)\cdot kD)\quad\text{moves,}
\]
on all trees with $n$ nodes and depth $D$, where $r(k)$ is called the \textit{regret}.\footnote{The term $2n$ in the definition is justified as follows. To visit all nodes, any offline algorithm (i.e., one that sees the tree and the activation sequence) would require at least $2n - kD-2$ moves. This is because, after exploring the tree, all agents could return to the root in $kD$ moves, and then all $n-1$ edges would have been traversed at least twice. The term in $O(r(k)\cdot kD)$ is therefore an additive overhead due to the online nature of the feedback. There, multiplying $r(k)$ by $k$ is a natural renormalization in light of the synchronous model, as explained in the next paragraph.}

\paragraph{Main result.}
In this paper, we resolve the problem of asynchronous collective tree exploration under unrestricted communication up to a constant factor. 

\begin{theorem}[Asynchronous collective tree exploration]\label{thm:main-acte}
There exists a deterministic asynchronous exploration algorithm, with unrestricted communication, that explores any rooted tree with $n$ nodes and depth $D$ using $k\ge2$ agents in at most
\[
2n+O\!\left(k\log^2(k)D\right)
\]
moves, for every activation sequence. The regret is $O(\log^2(k))$, which is order optimal.
\end{theorem}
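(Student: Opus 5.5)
We plan to follow the reduction framework of \cite{cosson2024breaking}: asynchronous collective tree exploration is reduced to an online \emph{tree-mining game} (or balls-into-bins game with depth-charged moves). In this game, each agent is attached to an "open" node, and each step forces the algorithm to either advance the active agent toward unexplored territory or reassign it. The exploration cost splits into two parts. The first part is $2n$, which accounts for the first traversal of each edge plus the return along it. The second part is an overhead charged to the reassignment moves, each costing at most $O(D)$. It therefore suffices to design a reassignment rule for which the total number of reassignments, weighted by their travel distance, is at most $O(k\log^2(k) D)$.

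\textbf{Potential.} We will construct the rule from a potential function over the configuration of agents on the explored subtree. The natural candidate is the "multiscale power regularizer" advertised in the abstract. For each explored node $v$, let $x_v$ be the number of agents in the subtree of $v$ that are still mining there. We take a potential of the form
\[
\Phi=\sum_{\ell=0}^{\log k}\sum_{v} w_\ell\, f_\ell(x_v),
\]
where $f_\ell(x)=x^{1+1/\ell}$-type powers (or $x\log x$ at the top scale) are truncated at scale $2^\ell$. The weights $w_\ell$ are chosen so that $\Phi\le O(k\log^2 k\cdot D)$ along any root-to-leaf chain of depth $D$; the factor $\log^2 k$ comes from $\log k$ scales, each contributing $O(\log k)$.

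\textbf{Key steps.}
\begin{enumerate}
\item Define the algorithm as a greedy or mirror-descent step. When the active agent's current branch becomes fully explored (a "closed" subtree), it is moved to the open node chosen to minimize the increase of $\Phi$, which amounts to balancing agents across open subtrees at all scales simultaneously.
\item Prove an amortized inequality: each reassignment of length $d$ is paid by a decrease $\Omega(d)$ in $\Phi$, up to increases caused only by discovering new edges. Each such discovery raises $\Phi$ by at most $O(\log^2 k)$ per unit of depth progress, summed over the $k$ agents.
\item Sum over time and use $\Phi\ge0$ to obtain the total overhead $O(k\log^2(k)D)$.
\item Obtain optimality from the known lower bound for the regret, which we cite rather than reprove.
\end{enumerate}

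\textbf{Main obstacle.} The hard part is step 2, the amortization. When a subtree closes, its agents must travel up to its root and down elsewhere. The decrease of $\Phi$ must dominate this travel at \emph{every} ancestor scale simultaneously, even though the adversary controls which agent moves and can starve some agents. We would first try to prove a local "telescoping" lemma: moving one agent from a node with $x$ agents to one with $y\le x/2$ agents across a common ancestor at distance $d$ decreases each scale's term by $\gtrsim d/\log k$. The choice of the power exponents per scale is tuned exactly so that this holds while the total range stays $O(\log^2 k)$ per depth unit.
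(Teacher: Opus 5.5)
Your plan depends on step~2, and as set up its accounting cannot give $O(k\log^2(k)D)$. You charge reassignments to \emph{decreases} of a potential evaluated at the agents' current positions, and you bound the increases ``per unit of depth progress''. But over a run the agents' total downward progress is of order $n$: every edge is descended when discovered, and agents descend again after each reassignment. Knowing $0\le\Phi\le O(k\log^2(k)D)$ at every instant does not bound the cumulative increase, so your accounting only gives an overhead that grows with $n$.

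Your split ``$2n$ plus reassignments'' also omits forced redundant moves. When several agents share an open branch, the adversary keeps activating them and each must traverse an edge. In the continuous tree-mining game of \cite{cosson2024collective}, which is the reduction the paper actually uses (Theorem~\ref{thm:ctm-reduction}), this is the elongation cost $x_l$ per unit of growth.

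The paper closes both holes with the opposite amortization. Its potential is the \emph{optimal value} $\Phi(t)=\min_{\mathbf{y}}\sum_{u\neq r}d_u y_u^{p}$, taken over fractional leaf configurations of mass $k$ on the current mining tree, where explored subtrees are deleted and degree-two vertices suppressed. This quantity has three key properties:
\begin{itemize}
\item it depends on the tree alone;
\item it never decreases under an adversary action: $\dot\Phi=y_l^p$ for elongations, $\Delta\Phi\ge\varepsilon d_l y_l^p$ for deletions, and forks are elongations from length $0$;
\item it satisfies $\Phi(t)\le k^pD\le 2kD$ throughout.
\end{itemize}
All cost, elongation cost included, is then charged to its increase: $C_{\mathbf{y}}(t)\le\frac{32\log(2k)}{\varepsilon}\Phi(t)$.

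The ingredients that make this work are also missing from your proposal.
\begin{itemize}
\item There is a single exponent $p=1+\varepsilon$ with $\varepsilon=\log(4/3)/\log(2k)$. ``Multiscale'' means every edge is weighted by $d_u$; it is not a sum of $\log k$ different powers.
\item The factor $\log^2 k$ is $\varepsilon^{-1}\log(2k)$. Viewing $\dot{\mathbf{y}}$ as a current with resistances $\rho_u/y_u$, the $\varepsilon^{-1}$ comes from the leaf voltage $V_l=y_l^\varepsilon/\varepsilon$. The $\log k$ comes from the identity $y_vV_v-h_v\dot y_v=y_lV_l$, which gives $\|\dot{\mathbf{y}}\|_\rho\le\frac{2y_l^p}{\varepsilon}\sum_{v\succ l}\bigl(1-y_{u(v)}/y_v\bigr)\le\frac{2y_l^p}{\varepsilon}\log(k/y_l)$.
\item Deletions need a separate interpolation, sending the deleted edge's length to infinity, together with a Bregman lower bound.
\item The integral configuration is not a greedy minimizer. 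It rounds the fractional minimizer through a repair rule that keeps $y_\ell\ge1/2$, $x_\ell\ge1$ and $x_\ell<y_\ell+3/2$, at cost at most $128\,C_{\mathbf{y}}$. This uses the near-ultrametric bound $d(v,\ell')\le\frac43d(v,\ell)$, which is guaranteed by the choice $(2k)^\varepsilon=4/3$.
\end{itemize}

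Your greedy step has no comparable structure. Your telescoping lemma needs the destination to carry at most half the source's load. That fails exactly when a lone agent leaves a closed branch for a crowded one, so nothing shows that a reassignment of length $d$ decreases $\Phi$ by $\Omega(d)$.

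Citing the $\Omega(\log^2k)$ lower bound for optimality is fine.
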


The classical setting of \textit{synchronous} collective tree exploration, introduced in \cite{fraigniaud2006collective}, allows all agents to move once per round. Our asynchronous guarantee yields a synchronous one via a classic reduction: the synchronous model corresponds to the adversary using a cyclic activation order, which provides $k$ moves per round \cite{cosson2024collective}. 

\begin{theorem}[Synchronous collective tree exploration]\label{cor:main-cte}
There exists a deterministic synchronous collective tree exploration algorithm, with unrestricted communication, that explores any rooted tree with $n$ nodes and depth $D$ using $k\geq 2$ agents in at most
\[
\frac{2n}{k}+O\!\left(\log^2(k)D\right) 
\]
rounds. The algorithm is $O(\log^2(k))$-competitive.
\end{theorem}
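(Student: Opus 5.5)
The plan is to derive Theorem~\ref{cor:main-cte} directly from Theorem~\ref{thm:main-acte} through the standard reduction from synchronous to asynchronous exploration. We simulate the synchronous model with the asynchronous algorithm $\mathcal{A}$ of Theorem~\ref{thm:main-acte}, using the cyclic activation sequence $1,2,\dots,k,1,2,\dots,k,\dots$. Round $t$ of the synchronous algorithm corresponds to the $k$ consecutive activations $(t-1)k+1,\dots,tk$. Within that block, agent $i$ performs the move that $\mathcal{A}$ prescribes when agent $i$ is activated.

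There is a subtlety with information. In the synchronous model, all agents move simultaneously, so within a round agent $i$ cannot use edges revealed by agents $1,\dots,i-1$ during that same round. With unrestricted communication, the synchronous algorithm can simply compute the whole block of $k$ moves sequentially at the start of the round, simulating $\mathcal{A}$. The only danger is that $\mathcal{A}$'s choice for agent $i$ might depend on edges that agent $j<i$ reveals by arriving at a new node in the same block.

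I would handle this as follows. A move is a traversal of an edge adjacent to the agent's current position, and that edge is already known. The edges revealed upon arrival only affect later decisions. So I would argue that the synchronous algorithm can delay information: it runs $\mathcal{A}$ on the asynchronous instance where each agent's discoveries are reported only at the end of the round. If $\mathcal{A}$ requires immediate feedback, the cleaner route is to note that the reduction in \cite{cosson2024collective} is stated as a black box. In that case I would invoke it exactly as the text above does: the synchronous model \emph{is} the asynchronous model with cyclic activation, because revealing the edges adjacent to the newly reached node at each step only gives more information. The sequential execution within a round is then realizable physically, since moves of distinct agents within a round do not interact.

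Given this simulation, $\mathcal{A}$ explores the tree within $M \le 2n + C k\log^2(k) D$ activations. Since every round provides exactly $k$ activations, the number of rounds is at most $\lceil M/k\rceil \le 2n/k + C\log^2(k)D + 1$, which is $2n/k+O(\log^2(k)D)$ because $D\ge1$ and $k\ge2$. For competitiveness, any offline synchronous algorithm needs at least $\max(D,\,2(n-1)/k - D)$ rounds, by the move-counting argument in the footnote divided by $k$. Hence $\mathrm{OPT}=\Omega(n/k + D)$, and the ratio is $O(\log^2 k)$.

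The main obstacle is justifying the information model in the reduction, i.e.\ that $\mathcal{A}$'s moves in a block can be committed simultaneously. I would first check whether $\mathcal{A}$'s guarantees hold under delayed reveals. Otherwise I would rely on the cited reduction.
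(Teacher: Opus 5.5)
Your proposal is correct and takes the same route as the paper. The paper simply invokes the reduction of \cite{cosson2024collective}: synchronous exploration is asynchronous exploration under cyclic activation, which gives $k$ moves per round, so the bound $2n+O(k\log^2(k)D)$ is divided by $k$, and competitiveness follows from $\mathrm{OPT}=\Theta(n/k+D)$ as you derive. One caveat is that your side remark that per-step reveals ``only give more information'' points the wrong way: the asynchronous model gives the algorithm more information than the synchronous one, which is exactly why the simulation is not automatic, so that step rests on the cited reduction, just as it does in the paper.
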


\paragraph{Technical challenges.} A useful point of comparison is the recent progress on the analysis of randomized algorithms for online metric problems such as $k$-server \cite{bubeck2018k}, metrical task systems \cite{bubeck2021metrical}, layered graph traversal \cite{bubeck2022shortest}, and ultrametric traversal \cite{bai2025unweighted}. In these works, \textit{multiscale entropic regularization} and the \textit{mirror-descent framework} played a central role in obtaining or improving competitive guarantees. Each of these problems required the application of multiscale entropic regularization in different ways, and despite several breakthroughs, the approach still lacks a unified understanding.

Collective tree exploration differs from all these problems in that it is typically studied for deterministic algorithms. Yet, conceptually, the swarm of $k$ agents can be interpreted as a distribution: specifically, one where each agent corresponds to a probability mass of $1/k$. This discrete structure is not suited to a direct application of continuous methods such as the mirror descent framework and multiscale entropic regularization \cite{bubeck2018k}. Therefore, we use the tree-mining game of \cite{cosson2024collective} to which asynchronous collective tree exploration reduces. The tree-mining strategy we present involves a new deterministic repair rule that rounds a fractional solution to a convex optimization problem.


A surprising aspect of the algorithm is that it relies on a multiscale power regularizer with exponent $1+\varepsilon$ for $\varepsilon=\Theta(1/\log k)$, rather than an entropic regularizer. On ultrametrics (i.e., when all leaves are at the same distance from the root) and in the limit $\epsilon \rightarrow 0$, the minimizer of the multiscale power regularizer coincides with the minimizer of the multiscale entropic regularizer due to the identity
\[
\frac{x^{\epsilon} - 1}{\epsilon} \rightarrow_{\epsilon \rightarrow 0} \ln(x), \qquad \forall x>0,
\]
but that is not the case on general trees. The power regularizer with small $\epsilon>0$ relates directly to Tsallis entropy in physics and learning theory, and we believe it could have other applications in online metric problems.

\paragraph{Related works.} The problem of (synchronous) collective tree exploration was introduced by \cite{fraigniaud2006collective}, who set as the main goal the design of algorithms exploring any tree with $n$ nodes and depth $D$ in at most $ c(k)\left(n/k + D\right)$ synchronous rounds, where $c(k)$ is called the competitive ratio.
This initial measure of performance was motivated by the fact that the number of time steps required by any team of $k$ agents to go through all edges of a tree $T$ satisfies $\text{OPT}(T)  = \Theta\left(\frac{n}{k}+D\right)$ (see \cite{fraigniaud2006collective} for details). Thus, $c(k)$ can be interpreted as a multiplicative factor the team pays for not knowing the tree, i.e., a \textit{price of exploration}. 

The original work of \cite{fraigniaud2006collective} proposed an $\Ocal(k/\log k)$-competitive algorithm that uses only limited communication between the agents, and conjectured that a constant competitive ratio was possible with complete communication. This conjecture was quickly disproved by \cite{dynia2007robots} with the lower bound $c(k) = \Omega(\log k /\log \log k)$. 

The competitive ratio of collective tree exploration was improved to $\Ocal(k/\exp(\sqrt{\log 2\log k}))$ by \cite{cosson2024breaking}, using a recursive algorithm, and then to $\Ocal(\sqrt k)$ by \cite{cosson2024collective}, using a quadratic regularizer. These works introduced the (algorithmically harder) setting of asynchronous collective tree exploration (ACTE) in which the measure of performance arising most naturally is the \textit{regret}. 
In this work, we obtain a regret guarantee matching the $\Omega(\log^2 k)$ lower bound established in \cite{cosson2025asynchronous}. Asynchronous regret guarantees translate to the weaker synchronous setting, where they yield the best competitive guarantees to date. For example, \cite{cosson2024collective} produced an asynchronous guarantee of the form $2n + O(k^2D)$, i.e., with regret $r(k) =O(k)$, inducing a synchronous guarantee of $2n/k + O(kD)$, which in turn leads to a $O(\sqrt{k})$-competitive algorithm. The present paper provides $O(\log^2 k)$-competitive algorithm, marking an exponential improvement for synchronous exploration as well.

Collective tree exploration has also been studied under additional assumptions. 
For trees that can be embedded in the $2$-dimensional grid, \cite{dynia2006smart} obtained an algorithm finishing in $\Ocal(\sqrt{D}(\frac{n}{k}+D))$ rounds. 
In the regime where the number of agents $k$ scales with the input tree size, $k\geq Dn^c$ for some constant $c>1$, \cite{dereniowski2015fast} proposed an exploration algorithm finishing in $\frac{c}{c-1}D+o(D)$ rounds. 
When the regret is allowed to depend superlinearly on $D$ \cite{brass2011multirobot, cosson2023efficient} also proposed other algorithms. 
Surprisingly, the problem of collective exploration on arbitrary graphs remains much less understood, and it is not even known whether an $o(k)$-competitive algorithm exists \cite{dereniowski2015fast,brass2014improved}.\footnote{Using a single agent to perform a DFS is $\Theta(k)$ competitive.}

The broader question of exploring an unknown environment with multiple agents is a recurring theme in computer science and robotics \cite{browne2012survey,parker2016multiple,das2019graph}. This challenge also naturally arises in modern AI applications \cite{koh2024tree,yamada2025ai,xin2025deepseek,hubert2026olympiad,hadfield2025multiagent}.\footnote{Collective tree exploration does not capture all such collective search problems. In particular, if an agent (or computing unit) can jump between distant states in constant time, as in the \textit{RAM model}, then parallel tree exploration becomes essentially trivial. Instead, CTE captures the \textit{mobile model}, in which distance between nodes represents a movement cost. This can be motivated in typical LLM use cases when accounting for the fact that moving between ``nearby'' contexts in the prefix tree is cheaper because it requires fewer KV-cache updates \cite{zheng2024sglang}.}
For example, the `Tree of Thoughts' paradigm \cite{yao2023tree, wei2022chain} has shown successes in complex reasoning and problem-solving with large language models. In that framework, the tree represents different reasoning paths (i.e., potential proofs), some of which lead to the desired conclusion. Similarly, \cite{chen2023walking} proposes navigating a large context via a combinatorial structure, such as a tree. 
More generally, we believe that the domain of online algorithms, that was initially motivated by the insights it provides on subroutines of operating systems and data structures, can serve as a theoretical framework for the interaction of autonomous intelligent agents.


\section{Notations and preliminaries}\label{sec:notations}
\subsection{Notations}
All logarithms are natural. Trees are finite and rooted, with root $r$. For a node $u\ne r$, we denote its parent by $p_u$ and the set of its children by $C_u$. We write $u\preceq v$ when $u$ is a descendant of $v$, allowing equality, and $u\prec v$ for strict descent. For $u\preceq v$, the path $u\rightarrow v$ includes $u$ and excludes $v$. The set of leaves is $\Lcal(T)$, abbreviated $\Lcal$, and $\Lcal_u=\{\ell\in\Lcal:\ell\preceq u\}$. In the rest of the paper, trees are weighted. The length of edge $(p_u,u)$ is $d_u$, and $d(u,v)$ is the length of the unique path between $u$ and $v$. The depth of a weighted tree is 
\[
D=\max_{\ell\in \Lcal}d(r,\ell).
\]

\paragraph{Configurations.}
A fractional leaf configuration is a vector $\by=(y_\ell)_{\ell\in\Lcal}$ with nonnegative entries and total mass $\sum_{\ell\in\Lcal}y_\ell=k$. Its coordinates are extended to all nodes by
\[
y_u=\sum_{\ell\in\Lcal_u}y_\ell.
\]
An integral configuration $\bx$ must have discrete masses $x_\ell \in \{1,\dots, k\}$ and a total mass $\sum_{\ell\in\Lcal}x_\ell=k$. 

\paragraph{Transportation and movement.}
For any two configurations $\by,\by'$ on a fixed weighted tree, their \textit{optimal transportation distance} is
\[
\OT_d(\by,\by')=\sum_{u\ne r}d_u|y_u-y_u'|.
\]
When a configuration indexed by time $\by(t)$ is differentiable, its \textit{instantaneous movement cost} is defined by 
\[
\|\dot{\by}(t)\|_d=\sum_{u\ne r}d_u|\dot y_u(t)|,
\]
where the derivative $\dot \by(t)$ can be interpreted as a flow between the leaves of the tree (note that $\dot \by$ satisfies Kirchhoff's current conservation law on all non-leaf nodes, including the root). The cumulative cost $C_{\by}(t)$ is the integral (or sum, when applicable) of all movement costs of an evolving configuration $\by(\cdot)$ up to time $t$. We omit time arguments when clear from context.

\subsection{The continuous tree-mining game}\label{sec:ctm-background}
We now present the continuous tree-mining game (CTM) introduced in \cite{cosson2024collective}. The game state consists of a weighted tree $T(t)$, controlled by the adversary, and an integral leaf configuration $\bx(t)$ of $k$ miners, controlled by the player. Starting with all miners at the root, the tree may evolve in the following ways:
\begin{itemize}
\item \textbf{Elongation.} The adversary may grow the edge length $d_l$ at a leaf $l$ with $x_l\ge2$ at unit rate. The player pays $x_l$ per unit of growth, in addition to any redistribution movement cost.
\item \textbf{Fork.} The adversary may fork a leaf $l$ with $x_l\ge3$ creating $m\in\{2,\ldots,x_l-1\}$ new children. The player chooses their common edge length $\delta\in(0,1]$ and moves miners at $l$ to its children.
\item \textbf{Deletion.} The adversary may delete a leaf $l$. The player must move its miners to surviving leaves. 
\end{itemize}
The player may update the miner's configuration $\bx(t)$ at any time, and pays the cumulative movement cost associated with redistributing miners between leaves. Over the course of the game, every nonroot internal vertex has at least two children. A vertex with degree two after deletion of a child is suppressed by replacing its two incident edges with one edge of their combined length.

\begin{theorem}[Tree-mining reduction, \cite{cosson2024collective}]\label{thm:ctm-reduction}
If a continuous tree-mining strategy can maintain a cumulative cost of at most $f(k,D)$ while the underlying tree's depth is bounded by $D$, then there is an asynchronous collective tree algorithm exploring any tree with $n$ nodes and depth $D$ in at most 
\[
2n+f(k,D)\qquad \text{moves.}
\]
\end{theorem}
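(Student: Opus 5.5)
The plan is to build an asynchronous exploration algorithm from a continuous tree-mining strategy, which I assume is given as a black box. The exploration algorithm is a simulation of that strategy. The weighted tree $T(t)$ of the game is the \emph{explored} part of the real tree, with some pruning:
\begin{itemize}
\item fully explored subtrees (no dangling unexplored edges) are deleted;
\item paths are compressed into single weighted edges;
\item each leaf of $T(t)$ is a ``frontier'' vertex at which some agents are working, or towards which they are travelling.
\end{itemize}
The integral configuration $\bx(t)$ records how many agents are assigned to each leaf. I will argue step by step that every physical move is charged to one of two sources: either a game cost, or a ``useful'' move. Useful moves are those that traverse an edge for the first or last time, in DFS-like fashion, and there are at most $2n$ of them.

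\textbf{Step 1 (charging $2n$).} When the adversary activates an agent assigned to leaf $\ell$, the agent acts according to where it stands:
\begin{itemize}
\item if it is not yet at $\ell$, it moves one step along the path towards $\ell$;
\item if it is at $\ell$ and $\ell$ has an unexplored incident edge, it traverses a new edge downward.
\end{itemize}
When $x_\ell\ge 2$, the agents at $\ell$ deepen the branch together, and this is simulated by an elongation event in which the player pays $x_\ell$ per unit of growth. When some agent discovers several children, a fork is triggered with $m\le x_\ell-1$ children, so that one agent's worth of mass is kept aside to continue. When a subtree becomes fully explored, it is deleted. The first traversal of each edge, together with the final backtrack, accounts for the $2n$ term. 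Elongations with a single agent ($x_\ell=1$) are not game events; those moves are the pure DFS moves of a lone agent, charged to $2n$.

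\textbf{Step 2 (charging redistribution).} When the strategy changes $\bx$, every agent whose assignment changes walks along the tree path to its new leaf. By the definition of $\OT_d$ and the cumulative cost $C_{\bx}$, the total walking length equals the game's movement cost, up to a constant factor from rounding. Asynchrony is harmless here: assignments change instantly in the simulation, and agents simply follow their current target. Moves that are later wasted because the target changed before arrival are still bounded by the path lengths paid in the game.

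\textbf{Step 3 (depth).} The depth of $T(t)$ is at most the true depth $D$, because weights are real path lengths. Fork edge lengths $\delta\le1$ correspond to single real edges, so they do not increase the depth either. Hence the cost charged in Steps 1--2 is at most $f(k,D)$, and the total number of moves is at most $2n+f(k,D)$.

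The main obstacle is the bookkeeping in Step 1: making sure that every move is counted exactly once. The risk is that a move gets counted both as a ``first traversal'' (against $2n$) and as an elongation (against $x_\ell$), or that moves are missed at the instants when agents have been reassigned but have not yet arrived. I would handle this by defining a potential equal to the sum of agents' distances to their assigned leaves. Each game event changes this potential by at most its game cost, and each non-useful move decreases it by $1$. The total over the whole run then gives the bound.
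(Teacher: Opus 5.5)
\paragraph{There is a genuine gap.} Your proposal assumes a simulation in which every event of the real exploration is a \emph{legal} event of the tree-mining game, but it never constructs one, and that construction is the substance of the reduction. (The paper does not reprove this statement; it imports it from \cite{cosson2024collective}.) Your overall architecture is the natural one: charge first and last traversals to $2n$, and charge everything else to game costs through a distance potential. The trouble is that the game is very restrictive:
\begin{itemize}
\item every leaf must carry at least one miner, so $T(t)$ has at most $k$ leaves;
\item elongation is allowed only at leaves with $x_\ell\ge2$;
\item a fork is allowed only at a leaf with $x_\ell\ge3$, and into at most $x_\ell-1$ children.
\end{itemize}
The real tree, not the algorithm, decides how many children a frontier vertex has. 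Your encoding ``leaves of $T(t)$ are frontier vertices'' therefore has no legal counterpart when a group of $x_\ell\le2$ agents reaches a branching vertex, or when a vertex has $m\ge x_\ell$ children.

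The sentence ``a fork is triggered with $m\le x_\ell-1$ children, so that one agent's worth of mass is kept aside'' leaves two things unexplained. It does not say what happens to the children left out of the fork: they carry no miner, so they cannot be leaves, yet they are unexplored and never deleted. Nor does it say where the agent ``kept aside'' is assigned, since its vertex is no longer a leaf after the fork.

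The lone-agent rule causes the same failure in another form. If a leaf with $x_\ell=1$ is not elongated while its agent runs a DFS below it, the game position of $\ell$ falls behind the real frontier. As soon as the strategy sends a second miner to $\ell$ (this paper's repair rule can do exactly that), the newcomer reaches $\ell$, finds no unexplored edge there, and your rules give it no move. In ACTE every activation forces a move. Those forced moves, and the walk down to the actual frontier, are neither first or last traversals nor paid for by the game.

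What is missing is a precise target (anchor) rule for the agents. You would then need to prove that, under this rule, each activation either traverses an edge for the first or last time, or decreases a potential whose increases are paid by legal game events.

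\paragraph{Steps 2--3 have a separate problem.} The fork length $\delta\in(0,1]$ is chosen by the player, and this paper takes it as small as needed. So the weights of $T(t)$ are not real path lengths. By your own Step~3, a fork edge of length $\delta$ stands for a real edge of length $1$. An agent crossing it walks $1$, while the game charges only $\delta$.

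Hence the claim that ``the total walking length equals the game's movement cost'' fails. The inequality you actually need, that game costs dominate real walking distances, is not established. Two further points:
\begin{itemize}
\item Losing a ``constant factor from rounding'' would give $2n+O(f(k,D))$, not the stated $2n+f(k,D)$. There is also no rounding in this reduction, since $\bx(t)$ is already integral.
\item Your potential (the sum of agents' distances to their assigned leaves) increases under useful moves whenever the leaf does not follow the agent, as in the lone-agent DFS. So ``game events raise it by at most their cost, non-useful moves lower it by one'' does not close the count by itself. It must be checked against a concrete rule, which the proposal does not supply.
\end{itemize}
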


\begin{theorem}[Continuous tree-mining bound]\label{thm:ctm-bound}
There exists a tree-mining strategy whose cumulative cost is bounded by 
\[
f(k,D)
\le\frac{8192}{\log(4/3)}\,k\log^2(2k)\,D = O(k\log^2(k)D).
\]
\end{theorem}

The proof of Theorem~\ref{thm:main-acte}
is obtained from Theorem~\ref{thm:ctm-bound} by applying Theorem~\ref{thm:ctm-reduction} with a tree-mining strategy defined in the next section, where Propositions~\ref{lem:rounding} and~\ref{lem:potential}, bound its cumulative cost by $O(k\log^2(k)D)$. 

\section{Algorithm and analysis}

\subsection{Algorithm description}
Fix $k\ge2$, and define
\[
\varepsilon=\frac{\log(4/3)}{\log(2k)},\qquad p=1+\varepsilon.
\]
Thus
\[
(2k)^\varepsilon=\frac43,\qquad
k^p\le2k,\qquad
0<\varepsilon<1,\qquad
\varepsilon^{-1}=\Theta(\log k).
\]

For a weighted tree $T$, define the \textit{multiscale power regularizer},
\begin{equation}
    \Phi_T(\by)=\sum_{u\ne r}d_u y_u^p.
\end{equation}
For the current tree $T(t)$, let $\by(t)$ be the fractional minimizer, 
\begin{equation}
    \by(t) \in \arg\min_{\substack{\by\ge0\\\sum_\ell y_\ell=k}}\Phi_T(\by).\label{eq:optim}
\end{equation}
\begin{lemma}
    Equation \eqref{eq:optim} uniquely defines $\by(t)$, which has strictly positive mass at every leaf.  Furthermore, during the continuous elongation of a leaf $l$, $\by(t)$ is differentiable, with 
    \[
    \forall u\in l \rightarrow r : \dot y_u\leq 0 \qquad \text{and}  \qquad \forall u\not \in l \rightarrow r :\dot y_u \geq 0.
    \] 
\end{lemma}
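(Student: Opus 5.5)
The argument has three parts: uniqueness and positivity from convexity and the KKT conditions, differentiability from the implicit function theorem, and the sign pattern from a recursive description of the minimizer with a comparison argument along the elongated path.

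\textbf{Existence, uniqueness, positivity.} The feasible set $\{\by\ge0,\sum_\ell y_\ell=k\}$ is a compact simplex, so a minimizer exists. Each map $\by\mapsto y_u$ is linear and $x\mapsto x^p$ is strictly convex for $p>1$, so $\Phi_T$ is convex. It is strictly convex in the leaf variables because every leaf $\ell$ has its own term $d_\ell y_\ell^p$ with $d_\ell>0$; hence the minimizer is unique.

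For positivity, suppose $y_\ell=0$ at the optimum and pick a leaf $\ell'$ with $y_{\ell'}>0$. Shift mass $\eta$ from $\ell'$ to $\ell$. Only the nodes on the two paths up to their common ancestor change. Since $y_u=0$ at $u=\ell$ (and possibly at other nodes on the path $\ell\to\mathrm{lca}$), the derivative of $y_u^p$ there is $0$. The terms on the $\ell'$ side decrease at rate $p\,d_u y_u^{p-1}>0$ for $u=\ell'$ at least. This strictly decreases $\Phi_T$ to first order, a contradiction.

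\textbf{Differentiability.} With all $y_\ell>0$, the KKT conditions read: for every leaf $\ell$, $\sum_{u\in \ell\to r} p\,d_u y_u^{p-1}=\lambda$ for a common multiplier $\lambda$. The Hessian of $\Phi_T$ restricted to the tangent space $\{\sum_\ell h_\ell=0\}$ is positive definite: it is at least $\sum_\ell p(p-1)d_\ell y_\ell^{p-2}h_\ell^2$, because $y>0$ and all $d_u>0$. So the implicit function theorem applied to $(\by,\lambda)$ as a function of the smooth parameter $d_l(t)$ gives that $\by(t)$ is differentiable during elongation.

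\textbf{Signs.} I would prove these by a recursive description of the optimum. For each node $u$, define the value function $V_u(m)=\min\{\sum_{w\preceq u,\,w\neq u} d_w y_w^p : y_u=m\}$ (summing over strict descendants of $u$, leaves included). By the scaling $y\mapsto cy$, $V_u(m)=c_u m^p$ for a constant $c_u>0$. The optimal split of mass $m$ at $u$ among its children $v\in C_u$ then minimizes $\sum_v (d_v+c_v)y_v^p$. This gives $y_v\propto (d_v+c_v)^{-1/(p-1)}$, and the children's coefficients combine via $c_u=\bigl(\sum_v (d_v+c_v)^{-1/\varepsilon}\bigr)^{-\varepsilon}$.

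Elongating $l$ increases $d_l$, hence the effective weight $d_l+c_l$ (with $c_l=0$). By induction up the path, the effective weight $d_u+c_u$ of every node $u$ on $l\to r$ increases, since $c_u$ is increasing in each child's weight. Off-path siblings have unchanged weights. So at each ancestor $a$ on the path, the fraction sent to the on-path child decreases and the fraction sent to each off-path child increases. Since the root mass $k$ is fixed, an induction downward from the root gives that $y_u$ decreases for $u$ on the path. For an off-path node $v$, let $w$ be its highest off-path ancestor (possibly $v$ itself), whose parent $a$ is on the path, so $a$ is on $l\to r$ or is $r$. Then $y_w=y_a\cdot\mathrm{frac}_w$. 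Here $y_a$ may decrease, so I would not argue through the product. Instead I would write $y_w=k\,(d_w+c_w)^{-1/\varepsilon}\big/\,S_a$ after unrolling the proportionality constants. This is the main obstacle. The cleanest route is to write the leaf masses explicitly as $y_\ell=k\prod_{\text{ancestors}}\text{fractions}$ and show directly that the normalizing constant governing off-path subtrees is monotone. Equivalently, use the KKT multiplier: off-path leaves satisfy the same equation with unchanged weights, and $\lambda$ should increase, which forces their masses up. I would try the $\lambda$ route first. Since on-path mass only decreases and total mass is conserved, off-path subtrees collectively gain mass. A monotone-comparison argument on the KKT system (each off-path leaf's marginal cost $\sum p d_u y_u^{p-1}$ is increasing in its ancestors' masses) then shows that no off-path node can lose mass. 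Differentiating gives the stated signs of $\dot y_u$.
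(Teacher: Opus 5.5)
\textbf{Gap: the off-path signs.} Your uniqueness argument, the implicit-function-theorem argument for differentiability, and the recursion $c_u=\bigl(\sum_{v\in C_u}(d_v+c_v)^{-1/\varepsilon}\bigr)^{-\varepsilon}$ are sound. So is the downward induction showing that $y_u$ is nonincreasing on $l\rightarrow r$. The gap is the off-path sign, which is the real content of the lemma. The paper takes ``every leaf $\ell\neq l$ gains mass'' from \cite[Proposition 2.6]{cosson2024collective} and then handles path nodes by conservation; you leave this step open. Your proposed formula $y_w=k(d_w+c_w)^{-1/\varepsilon}/S_a$ is false unless $a=r$. The correct identity is $y_w=y_a(d_w+c_w)^{-1/\varepsilon}/S_a=\bigl(h_a/(d_w+c_w)\bigr)^{1/\varepsilon}$ with $h_a=c_ay_a^{\varepsilon}$ (the paper's approximate height). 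This still contains the decreasing factor $y_a$. Your $\lambda$-route is also misstated. The first-order equation of an off-path leaf is not one ``with unchanged weights'': it contains the terms $d_uy_u^{\varepsilon}$ of its on-path ancestors, whose masses move. Moreover, a collective gain of off-path mass says nothing about each individual branch.

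\textbf{The missing step.} You need that $h_a$ is nondecreasing for every strict ancestor $a$ of $l$. At the root, $h_r=c_rk^{\varepsilon}=\lambda/p$ is nondecreasing because $c_r$ is. Going down the path, $h_u=h_{p_u}-d_uy_u^{\varepsilon}$ is nondecreasing, because $d_u$ is fixed for $u\neq l$ and $y_u$ is nonincreasing. Hence $y_w$ is nondecreasing for every off-path child $w$ of such an $a$. Every node below $w$ scales with $y_w$, since the proportions inside that subtree depend only on unchanged weights. Equivalently, correct unrolling gives $y_w=k\bigl(\frac{c_r}{d_w+c_w}\prod_{u\in a\rightarrow r}\frac{c_u}{d_u+c_u}\bigr)^{1/\varepsilon}$, in which every factor is nondecreasing. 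With this added, your dynamic-programming route becomes a self-contained proof that replaces the paper's citation. It proves the signs in the opposite order to the paper (path nodes first, then off-path nodes), which is fine.

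\textbf{Positivity.} Your positivity argument also fails as written. For an arbitrary populated leaf $\ell'$, nodes strictly between $\ell$ and $\mathrm{lca}(\ell,\ell')$ may carry positive mass. Their terms then grow at the positive first-order rate $p\,d_uy_u^{\varepsilon}$, so no contradiction follows. You must take $\ell'$ below the lowest ancestor $v$ of $\ell$ with $y_v>0$, so that the whole path $\ell\rightarrow v$ has zero mass; this is the paper's ``nearest populated leaf''. Alternatively, positivity follows from your recursion, since each child receives a share proportional to $(d_v+c_v)^{-1/\varepsilon}>0$. This only requires checking that the star problem $\min\sum_v a_vy_v^p$ over $\sum_v y_v=m$ has an interior optimum.
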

\begin{proof}
Since all edge lengths are positive, the minimizer $\by(t)$ is unique by strict convexity of $\Phi_T(\cdot)$. The mass is positive at every leaf because moving mass to a leaf with $y_\ell=0$ from the nearest populated leaf $\ell'\neq \ell$ decreases the objective at rate $p\sum_{u\in\ell'\rightarrow v}d_uy_u^{\epsilon}>0$, where $v$ is the lowest common ancestor of $\ell$ and $\ell'$.  During the continuous elongation of a leaf $l$, $\by(t)$ is differentiable, with $\dot y_l\leq 0$ and $\forall \ell\neq l :\dot y_\ell \geq 0$  \cite[Proposition 2.6]{cosson2024collective}.  Thus for $u\not \in l\rightarrow r$, $\dot y_u=\sum_{\substack{\ell\in\Lcal_u}}\dot y_\ell\ge 0$ and the sign of $\dot y_u$ for $u\in l\rightarrow r$ follows from Kirchhoff's current conservation law at non-leaf nodes. 
\end{proof}

\paragraph{Player's strategy.} We now describe the CTM strategy $\bx(t)$, which is initialized with all $k$ miners located at the root and is updated upon one of the following three events.

\textbf{a) Repair.}
Whenever a leaf $\ell$ satisfies $x_\ell-y_\ell\ge3/2$, denote by $v$ the first ancestor of $\ell$ satisfying $x_v-y_v<3/2$.
From $v$, descend through the underloaded children satisfying
$x_u-y_u<0$ until reaching a leaf $\ell'$, and move one miner
from $\ell$ to $\ell'$.
Repeat until every leaf satisfies $x_\ell-y_\ell<3/2$.

\textbf{b) Fork.}
When a leaf $\ell$ receives $m$ children, split its miners as evenly
as possible among them, with groups of $\{\left\lceil\frac{x_\ell}{m}\right\rceil, \left\lfloor\frac{x_\ell}{m}\right\rfloor\}$. The fork length
$\delta\in(0,1]$ is chosen to be sufficiently small (see Appendix~\ref{sec:rounding}). 

\textbf{c) Deletion.}
At the deletion of $l$, send its miners to any surviving leaf $\ell$ below
a sibling of $l$,
remove the deleted edge,
then perform repairs at $\ell$.

The CTM algorithm satisfies the following property, shown in Appendix \ref{sec:rounding}, and which implies that we can concentrate our effort on the cumulative cost of $\by(t)$.

\begin{restatable}{proposition}{roundinglemma}\label{lem:rounding}
The CTM algorithm $\bx(t)$ is well-defined and enforces at all times the invariant
\begin{equation}
    y_\ell\geq \frac12,\qquad
x_\ell\ge1,\qquad
x_\ell< y_\ell + \frac32.\tag{I}
\label{eq:invariants}
\end{equation}
Furthermore, denoting by
$C_{\bx}(t)$ (resp. $C_{\by}(t)$) the cumulative movement cost
of configuration $\bx(\cdot)$ (resp. $\by(\cdot)$) at time $t$,
we have
\begin{equation}
    C_{\bx}(t)\le128C_{\by}(t).\label{eq:ybx}
\end{equation}
\end{restatable}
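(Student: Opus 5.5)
We argue by induction over the events of the game (elongation, fork, deletion, repair), checking that each one preserves \eqref{eq:invariants} and charging every miner move to the movement of $\by$. Two extra facts will be maintained alongside \eqref{eq:invariants}: $\sum_\ell x_\ell=\sum_\ell y_\ell=k$, and $x_u-y_u<3/2$ at every node, not only at leaves.

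\textbf{Well-definedness of a repair.} Suppose a leaf $\ell$ has $x_\ell-y_\ell\ge 3/2$. At the root $x_r-y_r=0<3/2$, so a first ancestor $v$ with $x_v-y_v<3/2$ exists. Let $w$ be the child of $v$ on the path to $\ell$; then $x_w-y_w\ge 3/2$. Hence the other children of $v$ have total excess $\sum_{u\in C_v\setminus\{w\}}(x_u-y_u)<0$, so at least one of them is underloaded. Repeating at each underloaded node, which is the sum of its children, shows that an underloaded child always exists. The descent therefore reaches a leaf $\ell'$ with $x_{\ell'}<y_{\ell'}$.

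\textbf{Effect of one repair move.} After moving one miner, $x_{\ell'}$ increases by one and stays below $y_{\ell'}+1<y_{\ell'}+3/2$. Each node on $\ell'\to v$ was underloaded and gains one, so its excess stays below $1$. Each node on $\ell\to v$ loses one, and it stays at least $1/2$ above its $y$ value. Combined with $y_\ell\ge 1/2$, this keeps $x_\ell\ge 1$. The potential $\sum_\ell (x_\ell-y_\ell)_+$ strictly decreases and masses are integers, so the repair loop terminates.

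\textbf{Keeping $y_\ell\ge 1/2$.} This is where the choice $(2k)^\varepsilon=4/3$ enters. I would use the first-order (KKT) conditions: for two leaves $\ell,\ell'$ with lowest common ancestor $v$,
\[
\sum_{u\in\ell\to v}d_uy_u^\varepsilon=\sum_{u\in\ell'\to v}d_uy_u^\varepsilon .
\]
The claim should follow from comparing $y^\varepsilon$ values on sibling branches, using $y^\varepsilon\in[y_\ell^\varepsilon,k^\varepsilon]$ and $k^\varepsilon\le 4/3$. The key point is that no leaf is ever too unfavourable: each leaf has at least two miners when it is elongated or forked, and fresh fork edges have length $\delta$. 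The role of ``$\delta$ sufficiently small'' is exactly to make the new children essentially split $y_\ell$ evenly, so that $y_{\text{child}}\approx y_\ell/m\ge 1/2$ when $x_\ell\ge 3$ and $m\le x_\ell-1$. I expect this to be the \textbf{main obstacle}: the lower bound on $y_\ell$ during elongation is delicate. My first attempt would be an invariant of the form $y_\ell\ge x_\ell/2$-ish, or a direct exponent comparison in which the factor $4/3$ absorbs the loss.

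\textbf{Fork and deletion steps.} At a fork, even splitting gives children $x$ values of $\lceil x_\ell/m\rceil$ or $\lfloor x_\ell/m\rfloor$, and $\by$ splits nearly evenly when $\delta$ is small. So \eqref{eq:invariants} survives up to a small error, which the strict inequality and the choice of $\delta$ absorb. At a deletion, the miners go to a leaf under a sibling, and the repairs just analyzed restore \eqref{eq:invariants}.

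\textbf{Cost bound \eqref{eq:ybx}.} I would charge repair moves through the potential
\[
\Psi=\sum_{u\ne r}d_u\,(x_u-y_u-c)_+
\]
for a suitable constant $c$, or through a weighted variant. Each unit of $\by$-movement $d_u|\dot y_u|$ increases $\Psi$ by at most a constant times itself. A repair moves one miner along $\ell\to v\to\ell'$ at cost $d(\ell,\ell')$. On $\ell\to v$ the excess is at least $3/2$, so the move decreases $\Psi$ by $d(\ell,v)$. The return branch $v\to\ell'$ is the harder part, since it could be long. Its cost can be paid because its nodes were underloaded, which forces the $\by$ mass to have flowed there. I would handle it with a second potential term, $\sum_u d_u(y_u-x_u)_+$ with a smaller weight.

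Elongation costs are then bounded using $x_l<y_l+3/2\le 4y_l$, giving at most $4$ times the elongation cost paid by $\by$. Fork costs are at most $x_\ell\delta$, which is negligible for small $\delta$. Deletion transport is charged to $\Psi$ in the same way as repairs. Tracking the constants (a factor $4$ from $x\le 4y$, with further factors from the two potential terms and the branch lengths) should give the stated $128$.
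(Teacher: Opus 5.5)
The invariant $y_\ell\ge\frac12$, which you flag as the main obstacle, is left unproved, and the route you suggest cannot work. The first-order conditions only relate masses to edge lengths, and the adversary controls the edge lengths: elongating $(p_l,l)$ indefinitely drives $y_l\to0$, exactly as in the deletion interpolation. So no comparison of $y^\varepsilon$ along sibling branches bounds $y_l$ from below. Your fallback $y_\ell\ge x_\ell/2$ is false, since $x_\ell=2$, $y_\ell=0.6$ is compatible with \eqref{eq:invariants}.

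The lower bound comes from the rounding rule together with the game rules.
During an elongation of $l$, every other leaf mass is nondecreasing ($\dot y_\ell\ge0$ for $\ell\ne l$), so only $y_l$ drops.
The adversary may elongate only while $x_l\ge2$, and repairs fire as soon as $x_l-y_l$ reaches $\frac32$. Hence $y_l>x_l-\frac32\ge\frac12$ as long as elongation is allowed, and at $y_l=\frac12$ a repair brings $x_l$ down to $1$, which blocks further elongation.
Deletions only increase surviving masses, since they are limits of elongations.
At a fork, $m\le x_\ell-1$ and $x_\ell-y_\ell<\frac32$ give the strict margins $y_\ell/m>\frac34$ and $\lceil x_\ell/m\rceil-y_\ell/m<\frac54$, which a small $\delta$ preserves.

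Also drop your extra invariant $x_u-y_u<\frac32$ at internal nodes. It is false: the repair rule climbs to the \emph{first} ancestor with excess below $\frac32$ precisely because intermediate ancestors may exceed it (e.g.\ two sibling leaves of excess $1.4$ each). You never actually use it.

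The second gap is the charge for the return branch $v\to\ell'$ of a repair. Your auxiliary term $\sum_u d_u(y_u-x_u)_+$ does not pay for it. The descent only requires $x_u<y_u$, so deficits can be arbitrarily small; a fork with $x_\ell=3$, $m=2$, $y_\ell=2.02$ already creates one of about $0.01$. The term then drops by only $d_u\min(1,y_u-x_u)\ll d_u$.

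What is needed is the very comparison you wrote down, used \emph{after} \eqref{eq:invariants} is known. All subtree masses lie in $[\frac12,k]$, so \eqref{eq:FO} gives $d(v,\ell')\le(2k)^\varepsilon d(v,\ell)=\frac43d(v,\ell)$ for any two leaves below $v$. This, not the lower bound on $y_\ell$, is where $(2k)^\varepsilon=\frac43$ enters. With the single potential $P=8\sum_{u\ne r}d_u(x_u-y_u-\frac12)_+$:
the upward path lowers $P$ by exactly $8d(\ell,v)$;
the downward path raises it by at most $4d(v,\ell')$;
hence $\Delta C_x+\Delta P\le-7d(\ell,v)+5d(v,\ell')\le-\frac13d(\ell,v)<0$.

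The same comparison is needed at deletions, which cannot be charged ``like a repair'' because the deleted leaf need not be overloaded. The evacuation costs at most $(1+\frac43)d_\ell x_\ell\le12d_\ell y_\ell$, using $x_\ell\le4y_\ell$. The deleted edge alone forces $\Delta C_y\ge d_\ell y_\ell$, which pays for it.

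Finally, treat a fork exactly, as a zero-length split followed by an elongation to $\delta$, instead of calling its cost negligible. The bound $C_x\le128C_y$ is an exact inequality, so no error term can be absorbed.
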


\subsection{Proof strategy for bounding $C_{\by}(t)$}
The main technical lemma of the paper is the following inequality, which is obvious at time $t=0$, and remains true through elongations (Section \ref{sec:elongation}), deletions (Section \ref{sec:deletion}), and forks (Section \ref{sec:forks}). 
\begin{proposition}\label{lem:potential}
    Writing $\Phi(t)=\Phi_{T(t)}(\by(t))$, the following inequality remains true at all times,
\begin{equation}\label{eq:goal}
        C_y(t) \leq \frac{32\log(2k)}{\varepsilon}
\Phi(t).
\end{equation}
\end{proposition}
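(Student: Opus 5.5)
We argue by induction over the game, treating the three event types separately. Write $c=32\log(2k)/\varepsilon$. At $t=0$ both sides vanish. Since $\by(t)$ only moves during elongations, deletions and forks, it suffices to show three things. During an elongation, $\frac{d}{dt}C_y\le c\,\frac{d}{dt}\Phi$. At a deletion, the jump of $\by$ costs at most $c$ times the increase of $\Phi$, or $\Phi$ increases by enough slack. At a fork, $\by$ jumps, but $\Phi$ does not decrease by more than the slack we can afford. The natural tool throughout is the first-order optimality condition of \eqref{eq:optim}. It gives a Lagrange multiplier $\lambda$ with
\[
\sum_{u\in \ell\to r} p\,d_u y_u^{\varepsilon}=\lambda \qquad \text{for every leaf } \ell,
\]
all leaves being populated by the Lemma.

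\textbf{Elongation of leaf $l$.} By the envelope theorem, $\dot\Phi=y_l^p$, since only $d_l$ changes explicitly. By the Lemma, the movement cost is $\|\dot\by\|_d=\sum_{u\in l\to r}d_u|\dot y_u|+\sum_{u\notin l\to r}d_u\dot y_u$. We compute $\dot\by$ by differentiating the optimality conditions. On the path we have $\frac{d}{dt}\big(\sum_{u\in l\to r}p d_u y_u^\varepsilon\big)=\dot\lambda$. Off the path, the conditions are the same system with the branches hanging off $l\to r$ receiving inflow.

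For each ancestor $w$ of $l$ and sibling subtree $S$ hanging off $l\to r$ at $w$, the mass flowing into $S$ is governed by a linearized system with weights $\varepsilon p d_u y_u^{\varepsilon-1}$. Its transport cost inside $S$ should be bounded by (mass into $S$)$\times$(height of $S$) $\lesssim$ (inflow)$\cdot\lambda\, y^{-\varepsilon}/p$ type quantities. We then sum the contributions along the path and compare with $y_l^p$.

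The key inequality to aim for is
\[
\|\dot\by\|_d\le \frac{C\log(2k)}{\varepsilon}\,y_l^{p}.
\]
This is where the multiscale structure enters. Along $l\to r$ the masses $y_u$ increase from $y_l\ge 1/2$ (by \eqref{eq:invariants}) to $k$. Grouping the path into $O(\log k)$ dyadic scales of $y_u$, each scale should contribute $O(y_l^p/\varepsilon)$. The $1/\varepsilon$ comes from the inverse of the second derivative $\varepsilon p y^{\varepsilon-1}$. The factor $y^{\varepsilon}$ stays within constants since $(2k)^\varepsilon=4/3$.

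\textbf{Deletion and forks.} At a deletion of $l$, the new minimizer is obtained by redistributing $y_l$. The cost should be bounded by the increase $\Phi_{\text{new}}-\Phi_{\text{old}}$ up to a factor $O(1/\varepsilon)$. I would prove this via strong convexity of $\Phi$ in a transport-type norm: for nearby configurations, $\Phi$ gap $\gtrsim \varepsilon\cdot$(OT)$^2/$(scale). Alternatively, I would interpolate, continuously shrinking $d_l$ to $0$ with $y_l$ driven to $0$. For forks, the new edges have length $\delta$, which may be chosen tiny. The jump in $\by$ then costs $O(\delta k)$ in transport while $\Phi$ changes by $O(\delta k^p)$, both negligible. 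Alternatively, I would view the fork as elongation from length $0$ and reuse the elongation bound.

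\textbf{Main obstacle.} The main obstacle is the elongation inequality: a sharp bound on the off-path flow cost in terms of $y_l^p$ with only an $O(\log k/\varepsilon)$ loss. I would first solve it exactly on a star and on a caterpillar, to find the right per-scale charging. Then I would generalize by induction on subtrees. The induction would use the quantity $\sum_{u\in S}d_u y_u^{\varepsilon-1}\dot y_u$ and the fact that the marginal derivatives $p\sum d_u y_u^\varepsilon$ equalize across all leaves.
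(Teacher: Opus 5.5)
Your skeleton matches the paper's. You verify event by event that the inequality is preserved, and you get $\dot\Phi=y_l^p$ from the first-order condition. Your second fork option, an elongation of the new edges from length $0$, is exactly what the paper does. The first option, ``both negligible'', cannot preserve an exact inequality $C_y\le c\Phi$: no slack is guaranteed, and forks can occur arbitrarily often.

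The genuine gap is the elongation estimate $\|\dot\by\|_d\lesssim\varepsilon^{-1}\log(2k)\,y_l^p$. This is the real content of the proposition, and you leave it as a conjecture. Your per-subtree charge (inflow times height) is the right one, but you have no control on the inflows. The missing idea is electrical. Differentiating (FO) shows that $\dot\by$ is the current in the network with resistances $r_u=\rho_u/y_u$, where $\rho_u=d_uy_u^{\varepsilon}$, and leaf voltages $0$ except $V_l=y_l^{\varepsilon}/\varepsilon$. Because (FO) makes $\rho$ an ultrametric with heights $h_u$, the subtree at $u$ has effective resistance $h_u/y_u$. Hence:
\begin{itemize}
\item the inflow into a sibling subtree $c$ hanging at $v\succ l$ is $V_vy_c/h_v$;
\item its $\rho$-cost is exactly $h_v\dot y_c$;
\item therefore $\|\dot\by\|_\rho=2\sum_{v\succ l}(y_v-y_{u(v)})V_v$.
\end{itemize}

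The decisive step is the voltage decay $V_v\le (y_l/y_v)V_l$. It follows from the identity $y_vV_v-h_v\dot y_v=y_lV_l$, proved by induction up the path using Ohm's law on the path edge and on the sibling branches, together with $\dot y_v\le0$. The sum $\sum_{v\succ l}(1-y_{u(v)}/y_v)\le\log(k/y_l)$ then telescopes. Without this decay you only have $V_v\le V_l$, which gives $\|\dot\by\|_\rho\le 2k\,y_l^{\varepsilon}/\varepsilon$: a factor $k$ where you need $\log k$. Your heuristic that each dyadic scale contributes $O(y_l^p/\varepsilon)$ is true, but precisely because of this decay. You also omit the term $y_l$ in $\dot C_y$; this is harmless since $y_l\le 2y_l^p$.

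The deletion step has no working argument either.
\begin{itemize}
\item Strong convexity gives only a quadratic bound $\Delta\Phi\gtrsim\varepsilon\,\mathrm{OT}^2/S$. This cannot yield the linear bound $\mathrm{OT}\le c\,\Delta\Phi$ unless the scale $S$ is tied to $d_l$. The Hessian of $\Phi$ does not give you that: Cauchy--Schwarz brings in the total length of all edges on which $\by$ moves.
\item Your interpolation runs in the wrong direction. Shrinking $d_l$ to $0$ attracts mass to $l$; to drive $y_l$ to $0$ you must send $d_l\to\infty$. The paper adds $s\,y_l^p$ to the objective and lets $s\to\infty$.
\item Even then, integrating the elongation bound in the form $y_l^p\log(k/y_l)/\varepsilon$ is problematic. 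The invariant $y_l\ge1/2$ fails along the way and $\log(k/y_l)$ blows up.
\end{itemize}

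The paper proceeds in three steps. First it elongates $l$ until $y_l=1/2$, which is covered by the elongation case. Next it lower-bounds $\Delta\Phi\ge\varepsilon d_l y_l^p\ge\varepsilon d_l/4$ by keeping only the deleted edge in the Bregman divergence. Finally it upper-bounds the transport by re-expressing the elongation bound through $-\dot y_l$. Since $y_{p_l}\ge 2y_l$ along the interpolation, the voltage identity gives $V_{p_l}\le V_l/2$. Hence $-\dot y_l\ge y_l^p/(2\varepsilon\rho_l)$, and so $\|\dot\by\|_\rho\le 4\rho_l(-\dot y_l)\log(k/y_l)$. After checking $\rho_l(s)\le d_l$ and $d_u\le 2\rho_u(s)$ along the interpolation, the change of variables $\int_0^{1/2}\log(k/y)\,dy$ yields $\mathrm{OT}\le 4d_l(\log(2k)+1)$. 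The ratio to $\Delta\Phi$ is then $O(\log k/\varepsilon)$. So the deletion case also rests on the voltage decay you are missing.
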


Note that, by considering the configuration placing all mass on any one leaf of the tree, we have,
\begin{equation}
    \Phi(t) \le k^pD\le2kD.\label{eq:phi}
\end{equation}
Therefore, Proposition \ref{lem:rounding} and Proposition \ref{lem:potential} imply that
\[
\begin{aligned}
C_{\bx}(t)
&\le128C_{\by}(t)\\
&\le\frac{4096\log(2k)}{\varepsilon}\Phi(t)=O\!\left(k\log^2(2k)D\right).\\
\end{aligned}
\]
where we used $\Phi(t)\le2kD$ and
$\varepsilon=\log(4/3)/\log(2k)$.

We start the proof Proposition \ref{lem:potential} with two general-purpose lemmas. 

\begin{lemma}[Geometry of the fractional minimizer]\label{lem:geometry}
The first-order condition in \eqref{eq:optim} rewrites as
\[
\forall \ell \in \Lcal: \sum_{u\in\ell\rightarrow r}d_u y_u^\varepsilon
 =\frac{\Phi(t)}{k}.
\tag{FO}\label{eq:FO}
\]
Define for any $u\neq r$ the approximate edge length
\[
\rho_u=d_u y_u^\varepsilon,\quad \text{which satisfies} \quad \frac12d_u\le\rho_u\le2d_u \quad \text{under \eqref{eq:invariants}},
\]
then, the approximate height of $u$ given by
\[
\qquad
h_u=\sum_{v\in\ell\rightarrow u}\rho_v, \quad \text{where $\ell\preceq u$ is any descendant leaf,}
\]
does not depend on the choice of $\ell$; in particular, $h_r=\Phi(t)/k$ by \eqref{eq:FO}.
\end{lemma}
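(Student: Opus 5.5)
The proof has three parts: derive the first-order condition from the KKT conditions, bound $y_u^\varepsilon$ to compare $\rho_u$ with $d_u$, and show $h_u$ is well defined by subtracting two instances of \eqref{eq:FO}.

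\textbf{First-order condition.} Regard $\Phi_T$ as a function of the leaf masses $(y_\ell)_{\ell\in\Lcal}$, with $y_u=\sum_{\ell\in\Lcal_u}y_\ell$. Then
\[
\frac{\partial \Phi_T}{\partial y_\ell}=p\sum_{u\in \ell\rightarrow r} d_u y_u^{\varepsilon}.
\]
By the preceding lemma the minimizer has $y_\ell>0$ at every leaf, so the nonnegativity constraints are inactive. The KKT conditions for the simplex constraint $\sum_\ell y_\ell=k$ therefore say that all these partial derivatives equal a common Lagrange multiplier $\lambda$. To identify $\lambda$, multiply by $y_\ell$ and sum over leaves. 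Exchanging the order of summation gives
\[
\sum_\ell y_\ell\sum_{u\in\ell\to r}d_uy_u^\varepsilon=\sum_{u\ne r}d_uy_u^{\varepsilon}\sum_{\ell\in\Lcal_u}y_\ell=\sum_{u\neq r}d_u y_u^{p}=\Phi(t).
\]
Hence $\lambda k/p=\Phi(t)$, which is exactly \eqref{eq:FO}.

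\textbf{Bounds on $\rho_u$.} Under \eqref{eq:invariants} every $u\ne r$ has at least one descendant leaf, so $y_u\ge 1/2$. Also $y_u\le k$. Since $\varepsilon<1$, we get $y_u^\varepsilon\ge (1/2)^\varepsilon\ge 1/2$. In the other direction, $y_u^\varepsilon\le k^\varepsilon\le(2k)^\varepsilon=4/3\le2$. Multiplying by $d_u$ gives $\tfrac12 d_u\le\rho_u\le2d_u$.

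\textbf{Independence of $h_u$ from the chosen leaf.} This is the only step with some content. Take two leaves $\ell,\ell'\preceq u$ and apply \eqref{eq:FO} to each. Write the path $\ell\to r$ as $(\ell\to u)\cup(u\to r)$, and likewise for $\ell'$. The portion $u\to r$ is common to both paths, so subtracting the two equalities gives
\[
\sum_{v\in\ell\to u}\rho_v=\sum_{v\in\ell'\to u}\rho_v .
\]
Thus $h_u$ is well defined. Taking $u=r$, the path $\ell\to r$ is the full path, and \eqref{eq:FO} gives $h_r=\Phi(t)/k$.

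I expect no real obstacle. The two points needing care are that positivity of every $y_\ell$, from the previous lemma, is what makes the multiplier equality hold at every leaf, and the path-splitting convention ($u$ included, $v$ excluded).
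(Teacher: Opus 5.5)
Your proposal is correct and follows the paper's proof. The steps match: equal leaf gradients at the interior minimizer (using positivity from the preceding lemma), identifying the multiplier by weighting with $y_\ell$ and summing to get $k\lambda=p\Phi(t)$, and the bound $2^{-\varepsilon}\le y_u^\varepsilon\le k^\varepsilon\le 2$ from $y_u\in[1/2,k]$. Your explicit cancellation of the common segment $u\rightarrow r$, which shows $h_u$ is well defined, is a step the paper leaves implicit, and it is valid under the stated path convention.
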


\begin{proof}
The leaf gradients of $\Phi$ are all equal at the minimizer of \eqref{eq:optim}, 
\[
\forall \ell \in\Lcal : p\sum_{u\in \ell \rightarrow r}d_u y_u^\varepsilon = \lambda
\]
for some $\lambda\in \mathbb{R}$. Furthermore, we have
\[
k\lambda = \left(\sum_{\ell \in \Lcal}y_\ell\right) \lambda = p\sum_{\ell\in\Lcal}y_\ell\sum_{u\in \ell \rightarrow r}d_u y_u^\epsilon = p\Phi(t),
\]
which proves \eqref{eq:FO}. Under \eqref{eq:invariants}, every subtree mass lies in $[1/2,k]$. Hence
\[
\frac12\le 2^{-\varepsilon}\le y_u^\varepsilon
\le k^\varepsilon\le2,
\]
which proves the comparison between $\rho_u$ and $d_u$. 
\end{proof}

\begin{lemma}
    Under \eqref{eq:invariants}, any two distinct leaves $\ell,\ell'\preceq v$ satisfy
\begin{equation}
    \frac{d(v,\ell')}{d(v,\ell)}\le\frac43.
\label{eq:geometry}
\end{equation}
\end{lemma}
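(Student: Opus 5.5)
The plan is to compare both distances with the approximate height $h_v$ of Lemma~\ref{lem:geometry}, which is the same whichever descendant leaf is used to compute it. The whole argument rests on the first-order condition \eqref{eq:FO}, or more directly on the leaf-independence of $h_v$. For any leaf $\ell\preceq v$ we have
\[
h_v=\sum_{u\in\ell\rightarrow v}\rho_u=\sum_{u\in\ell\rightarrow v}d_u y_u^{\varepsilon}.
\]
Applying this to both $\ell$ and $\ell'$ gives two expressions for the same quantity $h_v$.

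First I will bound the weights $y_u^\varepsilon$ on any edge of a leaf-to-$v$ path. Every $u$ on such a path satisfies $u\neq r$, since the path excludes $v$, and $u\succeq\ell$ for the corresponding leaf $\ell$.

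\begin{itemize}
\item Lower bound: by \eqref{eq:invariants} and monotonicity of subtree masses, $y_u\ge y_\ell\ge 1/2$.
\item Upper bound: since the total mass is $k$, we have $y_u\le k$.
\end{itemize}

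Hence $2^{-\varepsilon}\le y_u^\varepsilon\le k^\varepsilon$. These bounds are slightly sharper than the factor-$2$ comparison stated in Lemma~\ref{lem:geometry}, and the sharper form is exactly what is needed here.

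Next I will sandwich $h_v$ using each of the two leaves.

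\begin{itemize}
\item Along the path $\ell\rightarrow v$: $h_v\le k^\varepsilon\sum_{u\in\ell\rightarrow v}d_u=k^\varepsilon d(v,\ell)$.
\item Along the path $\ell'\rightarrow v$: $h_v\ge 2^{-\varepsilon}d(v,\ell')$.
\end{itemize}

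Combining the two inequalities gives
\[
d(v,\ell')\le 2^{\varepsilon}h_v\le (2k)^{\varepsilon}d(v,\ell)=\tfrac43\, d(v,\ell),
\]
where the last equality uses the choice $\varepsilon=\log(4/3)/\log(2k)$. Since $\ell\neq\ell'$ are both below $v$, the node $v$ is not a leaf. The path $\ell\rightarrow v$ therefore contains at least one edge of positive length, so $d(v,\ell)>0$ and the ratio is well defined.

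I do not expect a real obstacle here. The only care needed is to use the precise bounds $2^{-\varepsilon}$ and $k^{\varepsilon}$, rather than the cruder factor $2$ from Lemma~\ref{lem:geometry}, so that the product is $(2k)^\varepsilon=4/3$ and not $4$. One also has to check that the lower bound $y_u\ge 1/2$ propagates from leaves to internal nodes, which holds because $y_u=\sum_{\ell\in\Lcal_u}y_\ell$.
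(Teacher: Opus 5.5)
Your proof is correct and is essentially the paper's argument. Both compare the two path lengths through the leaf-independent height $h_v$, using the sharp bounds $2^{-\varepsilon}\le y_u^{\varepsilon}\le k^{\varepsilon}$ to obtain the factor $(2k)^{\varepsilon}=4/3$; the paper simply writes the chain with the roles of $\ell$ and $\ell'$ swapped.
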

\begin{proof}
    Both paths from $v$ to $\ell$ and $\ell'$ have approximate length $h_v$, using $\frac{1}{2}\leq y_u\leq k$ under \eqref{eq:invariants}, we have, 
    \[
    d(v,\ell) = \sum_{u\in \ell\rightarrow v}d_u \leq 2^\epsilon h_v\leq (2k)^\epsilon \sum_{u\in \ell'\rightarrow v}d_u  = \frac{4}{3}d(\ell',v).
    \]
\end{proof}

\subsection{Leaf elongations: an electrical argument}\label{sec:elongation}
In this section, we show that \eqref{eq:goal} is preserved during elongations of an edge $(p_l,l)$, i.e., that
\[
\boxed{\dot C_y(t):= y_l+ ||\dot \by(t)||_d \leq \frac{32\log(2k)}{\varepsilon}\dot \Phi(t).} \tag{Elongation Eq.}\label{eq:elong}
\]
By the chain rule, when $(p_l,l)$ is extended at unit rate $\dot d_l =1$, we have, 
\[
\dot\Phi(t) = y_l^p +  \nabla_\by\Phi(t) \cdot \dot  \by(t) = y_l^p
\]
since $\nabla_\by\Phi(t) \cdot \dot  \by(t) = 0$ by \eqref{eq:FO}. We therefore turn to expressing the left-hand side of \eqref{eq:elong} in terms of $y_l^p$. We start with an electrical perspective given in Lemma \ref{lem:electrical} and Lemma \ref{lem:voltage} below.

\begin{lemma}[Electrical representation]\label{lem:electrical}
In the elongation of leaf $l$ at unit rate, for every leaf $\ell$, we have,
\[
\sum_{u\in\ell\rightarrow r}\frac{\rho_u}{y_u}\dot y_u
=\frac{\dot \Phi(t)/k-y_l^\varepsilon\mathbf1_{\{\ell=l\}}}{\varepsilon}.
\]
Therefore, $\dot \by$ can be interpreted as the current obtained in the resistive network on $T$ where each edge $(p_u,u)$ for $u\neq r$ has resistance
\begin{equation}
    r_u=\frac{\rho_u}{y_u} \qquad \text{and leaf voltages are}\qquad \forall \ell \neq l : V_\ell = 0 \quad\text{and}  \quad V_l = y_l^\varepsilon/{\varepsilon}.\label{eq:network}
\end{equation}
In this network, the effective resistance between $u$ and $\Lcal_u$ is equal to 
\[
R_u = \frac{h_u}{y_u}.
\]
\end{lemma}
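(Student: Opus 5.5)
Differentiate the first-order condition \eqref{eq:FO} in time along the elongation; this yields both the displayed identity and the electrical picture. Then check the effective-resistance formula by induction from the leaves upward.

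\textbf{Differentiating \eqref{eq:FO}.} For a fixed leaf $\ell$, write $\sum_{u\in\ell\rightarrow r} d_u y_u^\varepsilon=\Phi(t)/k$. Differentiating with $\dot d_u=\mathbf 1_{\{u=l\}}$ gives
\[
\mathbf 1_{\{\ell=l\}}\, y_l^\varepsilon+\sum_{u\in\ell\rightarrow r}\varepsilon\, d_u y_u^{\varepsilon-1}\dot y_u=\frac{\dot\Phi(t)}{k}.
\]
Here the elongated edge $(p_l,l)$ lies on $\ell\rightarrow r$ only if $\ell=l$, and differentiability of $\by$ is given by the earlier lemma. Since $d_u y_u^{\varepsilon-1}=\rho_u/y_u$, dividing by $\varepsilon$ gives the identity.

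\textbf{Electrical interpretation.} Set $r_u=\rho_u/y_u$ and define the potential of a node $w$ by
\[
V_w=\frac{\dot\Phi/k}{\varepsilon}-\sum_{u\in w\rightarrow r} r_u\dot y_u ,
\]
so that the root is at potential $\dot\Phi/(k\varepsilon)$. Then for every edge $(p_u,u)$ we have $V_u-V_{p_u}=-r_u\dot y_u$, which is Ohm's law with current $\dot y_u$ flowing from $p_u$ to $u$. The identity says exactly that $V_\ell=0$ for $\ell\neq l$ and $V_l=y_l^\varepsilon/\varepsilon$. Kirchhoff's law holds at internal nodes, including the root, because total mass is conserved. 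Leaves are the only terminals, so $\dot\by$ is the unique current of the network \eqref{eq:network}. The root carries no external injection; it is simply an internal node at the induced potential.

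\textbf{Effective resistance $R_u=h_u/y_u$.} Induct upward. For a leaf, $R_\ell=0=h_\ell/y_\ell$. For an internal $u$, the subtree hanging from $u$ toward $\Lcal_u$ consists of parallel branches, one per child $c\in C_u$, each made of edge $r_c$ in series with $R_c$. The resistance of branch $c$ is
\[
\frac{\rho_c}{y_c}+\frac{h_c}{y_c}=\frac{\rho_c+h_c}{y_c}=\frac{h_u}{y_c},
\]
where $h_u=\rho_c+h_c$ for every child $c$ follows from the well-definedness of $h$ in Lemma~\ref{lem:geometry}. Combining in parallel gives
\[
R_u^{-1}=\sum_{c} \frac{y_c}{h_u}=\frac{y_u}{h_u}.
\]

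The only delicate point is the bookkeeping in the first step. The derivative acts on both $d_l$ and the $y_u$, and the $\dot d_l$ term appears only on the path from $l$. The Kirchhoff consistency at the root must also be noted so that the network is well posed. Everything else is direct.
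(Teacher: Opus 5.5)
Your proof is correct and takes the same route as the paper: you differentiate \eqref{eq:FO} with $\dot d_u=\mathbf 1_{\{u=l\}}$ to get the identity and the Ohm/Kirchhoff picture, then compute $R_u$ by induction from the leaves, using $h_u=\rho_c+h_c$ and parallel conductances summing to $y_u/h_u$. Your explicit potentials, which fix the shift at the root, and your Kirchhoff check at the root via mass conservation simply spell out what the paper leaves implicit with ``voltages are defined up to a constant shift.''
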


\begin{proof}
Differentiating \eqref{eq:FO} for every leaf $\ell$, gives
\[
\sum_{u\in\ell\rightarrow r}\frac{\rho_u}{y_u}\dot y_u
=\frac{\dot \Phi(t)/k-y_l^\varepsilon\mathbf1_{\{\ell=l\}}}{\varepsilon},
\]
which implies $\dot \by(t)$ is the unique solution of Ohm's and 
Kirchhoff's laws for the specified resistances and voltages (note that voltages are defined up to a constant shift).

For the computation of effective resistances $R_u$, we argue inductively from the leaves, which have zero effective resistance. Considering a node $u$ that is not a leaf, the induction hypothesis allows us to see the branch $(u,c)$ for $c\in C_u$ as having effective resistance $R_c + r_c = \frac{h_u}{y_c}$. Then, for all the children $c\in C_u$, the parallel conductances add up to $\sum_{c\in C_u}y_c/h_u=y_u/h_u$, proving that the conductance of the subtree rooted at $u$ is $y_u/h_u$; thus, $R_u = h_u/y_u$.
\end{proof}

\begin{lemma}[Voltage identity]\label{lem:voltage}
    In the electrical network described in \eqref{eq:network}, the voltage at node $v\succeq l$ satisfies the following identity,
    \begin{equation}
    y_v V_v - h_v\dot y_v = y_l V_l, \label{eq:open-circuit}
\end{equation}
and since $\dot y_v\leq 0$, we have $V_v \leq \frac{y_l}{y_v}V_l$. 
\end{lemma}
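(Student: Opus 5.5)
The plan is to prove \eqref{eq:open-circuit} by induction along the path from $l$ up to the root, using only Ohm's law on the edge just traversed and the effective resistances $R_u=h_u/y_u$ of the side branches from Lemma~\ref{lem:electrical}. Throughout, voltages are normalized as in \eqref{eq:network}, with $V_\ell=0$ for $\ell\neq l$ and $V_l=y_l^\varepsilon/\varepsilon$. With this normalization, Lemma~\ref{lem:electrical} gives $V_\ell=-\sum_{u\in\ell\rightarrow r}r_u\dot y_u+\text{const}$. So the sign convention for Ohm's law across an edge $(p_u,u)$ is
\[
V_{p_u}=V_u+r_u\dot y_u .
\]
Here $\dot y_u$ is the net current flowing down into the subtree of $u$; it is nonpositive on the path $l\rightarrow r$, so current flows upward from $l$.

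\textbf{Base case} $v=l$. The path $l\rightarrow l$ is empty, so $h_l=0$, and the identity reads $y_lV_l=y_lV_l$.

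\textbf{Inductive step.} Let $c$ be a node on the path from $l$ with $c\neq r$, write $v=p_c$, and assume $y_cV_c-h_c\dot y_c=y_lV_l$.

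First I compute the currents into the siblings $c'\in C_v\setminus\{c\}$. The subtree below $c'$ contains no copy of $l$, so all its leaf voltages are $0$. The branch $(v,c')$ together with that subtree is therefore a single resistor of resistance $r_{c'}+R_{c'}=h_v/y_{c'}$ (the computation in the proof of Lemma~\ref{lem:electrical}) from voltage $V_v$ to ground. Hence $\dot y_{c'}=V_vy_{c'}/h_v$. Summing over the children of $v$ gives
\[
\dot y_v=\dot y_c+\frac{V_v}{h_v}\,(y_v-y_c).
\]

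Next I substitute this together with Ohm's law $V_v=V_c+r_c\dot y_c$ and the identities $r_cy_c=\rho_c$, $h_v=h_c+\rho_c$:
\[
y_vV_v-h_v\dot y_v=y_cV_v-h_v\dot y_c
=y_cV_c+\rho_c\dot y_c-(h_c+\rho_c)\dot y_c
=y_cV_c-h_c\dot y_c .
\]
This equals $y_lV_l$ by the induction hypothesis, which completes the induction.

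\textbf{Consequence.} The inequality follows at once. By the earlier lemma, $\dot y_v\le0$ for $v$ on the path from $l$. Since $h_v\ge0$, the term $-h_v\dot y_v$ is nonnegative, so $y_vV_v\le y_lV_l$. Dividing by $y_v>0$ gives $V_v\le\frac{y_l}{y_v}V_l$.

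The only delicate point, and the main obstacle I expect, is fixing the sign convention between $\dot y_u$ and the voltage drop consistently with Lemma~\ref{lem:electrical}. It also has to be checked that grounding all leaves other than $l$ at $0$ is legitimate. It is, because the lemma pins down leaf voltages only up to a common shift, and the relevant differences are $V_l-V_\ell=y_l^\varepsilon/\varepsilon$. Once this is settled, the rest is the algebra above.
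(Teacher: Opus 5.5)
Your proposal is correct and follows essentially the same route as the paper. Both argue by induction up the path from $l$: the grounded sibling branches, with effective resistance $h_v/y_{c'}$, give $\dot y_v=\dot y_c+V_v(y_v-y_c)/h_v$; Ohm's law on $(c,v)$ together with $h_v=h_c+\rho_c$ then closes the induction, and the inequality follows from $\dot y_v\le 0$ exactly as in the paper.
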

\begin{proof}
    For a node $v\succeq l$, we obtain \eqref{eq:open-circuit}
by induction, starting from $l$ on which \eqref{eq:open-circuit} is obvious. Suppose that \eqref{eq:open-circuit} is true for $u$, the child of $v$ that is an ancestor of $l$. Using Ohm's law on the branches associated to $c\in C_v\setminus\{u\}$, we have that
\[
\dot y_v = \sum_{c\in C_v}\dot y_c = \dot y_{u} + V_v\sum_{c\in C_v \setminus\{u\}}\frac{y_c}{h_v} = \dot y_{u} + V_v \frac{y_v-y_{u}}{h_v}
\]
and using Ohm's law on edge $(u,v)$ we have that 
\[
V_v - V_u = r_u \dot y_u = \frac{\rho_u}{y_u}\dot y_u
\]
Using those two identities, as well as the induction hypothesis applied at $u$ gives
\begin{align*}
    y_vV_v - h_v\dot y_v&= y_uV_v +(y_v-y_u)V_v - h_v\dot y_v\\
    &=y_uV_v +h_v(\dot y_v - \dot y_{u}) - h_v\dot y_v \tag{Ohm's law on $(c,v)$ for $c\in C_v\setminus\{u\}$}\\
    &= y_uV_v -h_v \dot y_{u}\\
    &=y_uV_u -h_u \dot y_{u}  + y_u(V_v-V_u)- \rho_u \dot y_u\tag{$h_v = h_u+\rho_u$}\\
    &= y_uV_u -h_u \dot y_u\tag{Ohm's law on $(u,v)$}\\
    &=y_lV_l \tag{induction hypothesis}.
\end{align*}
\end{proof}

\begin{lemma}[Bounding movement by $y_l$]\label{lem:open-circuit} In an elongation of edge $(p_l,l)$ at unit rate, the instantaneous movement cost in the approximate edge length metric satisfies
\[
\begin{aligned}
\|\dot{\by}\|_{\rho}
\le\frac{2y_l^p}{\varepsilon}
\log\!\left(\frac{k}{y_l}\right).
\end{aligned}
\tag{E}\label{eq:elongation}
\]
    
\end{lemma}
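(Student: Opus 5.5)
The plan is to split $\|\dot{\by}\|_\rho=\sum_{u\ne r}\rho_u|\dot y_u|$ into two parts. The first is the \emph{on-path} part, over edges $u\in l\rightarrow r$, where $\dot y_u\le 0$. The second is the \emph{off-path} part, over all other edges, where $\dot y_u\ge 0$. I will express both parts through the same quantities, namely the currents injected into side branches at each ancestor $v\succ l$, and then bound these with the voltage identity of Lemma~\ref{lem:voltage}.

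\emph{Side currents.} Fix an ancestor $v\succ l$ and let $u\in C_v$ be its child on the path to $l$. In the network \eqref{eq:network} every leaf other than $l$ is at voltage $0$. The branch hanging from $v$ through a child $c\in C_v\setminus\{u\}$ has total resistance $h_v/y_c$ by Lemma~\ref{lem:electrical}. The current entering it is therefore $V_v y_c/h_v$, and the total side current at $v$ is
\[
I_v = V_v\,\frac{y_v-y_u}{h_v},
\]
which is exactly the quantity $\dot y_v-\dot y_u$ appearing in the proof of Lemma~\ref{lem:voltage}.

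\emph{Off-path cost.} Inside a side subtree all currents flow downward toward the leaves, which sit at voltage $0$. Each root-to-leaf path from $v$ through $c$ has $\rho$-length exactly $h_v$, because the approximate height $h_v$ does not depend on the leaf. Hence the $\rho$-weighted cost of the current inside that branch, including the edge $(v,c)$, equals (current entering it)$\times h_v$: each unit of current reaches some leaf and travels $\rho$-distance exactly $h_v$. Summing over $c$, the off-path cost attributable to $v$ is $I_v h_v = V_v(y_v-y_u)$.

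\emph{On-path cost.} By Kirchhoff's law and $\dot y_r=0$, the flow $|\dot y_u|$ on a path edge $u\in l\rightarrow r$ equals the sum of the side currents $I_v$ over ancestors $v\succ u$. Exchanging the order of summation, each $I_v$ is charged the $\rho$-length of the path from $l$ to $v$, which is $h_v$. So the on-path cost is also $\sum_{v\succ l} I_v h_v=\sum_{v\succ l}V_v(y_v-y_u)$.

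\emph{Combining the two parts.} The total is therefore $2\sum_{v\succ l}V_v(y_v-y_u)$. Lemma~\ref{lem:voltage} gives $V_v\le y_lV_l/y_v=y_l^p/(\varepsilon y_v)$, using $V_l=y_l^\varepsilon/\varepsilon$. Each term is then at most
\[
\frac{y_l^p}{\varepsilon}\cdot\frac{y_v-y_u}{y_v}\le \frac{y_l^p}{\varepsilon}\log\frac{y_v}{y_u}.
\]
Along the path these logarithms telescope from $y_l$ up to at most $y_r=k$, which yields \eqref{eq:elongation}.

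The step needing most care is the claim that every unit of side current at $v$ travels $\rho$-distance exactly $h_v$, both inside the side subtree and along the path back to $l$. This uses that all currents in side subtrees are nonnegative (the lemma following \eqref{eq:optim}), so that no cancellation occurs, together with the leaf-independence of $h_v$ from Lemma~\ref{lem:geometry}. I must also fix the voltage gauge so that non-$l$ leaves are at $0$; then $V_v\ge0$ and Lemma~\ref{lem:voltage} applies as stated.
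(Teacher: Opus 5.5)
Your proposal is correct and follows the paper's proof. The side currents $V_v y_c/h_v$ give $\|\dot{\mathbf{y}}\|_\rho = 2\sum_{v\succ l}(y_v-y_{u(v)})V_v$, and then the voltage bound $V_v\le y_lV_l/y_v$, the inequality $1-a\le-\log a$, and telescoping finish the argument. Your on-path/off-path accounting spells out the factor $2$, which the paper justifies in one line by noting that $\rho$ defines an ultrametric with heights $h_v$.
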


\begin{proof}
Consider a strict ancestor $v\succ l$ and a child $c$ such that $l \not \in \Lcal_c$, the current flowing from $v$ to $\Lcal_c$ is 
\begin{equation}
    \dot y_c = V_v \frac{y_c}{h_v}\label{eq:cur-c}
\end{equation}
by the effective resistance argument. Therefore, denoting by $u(v)$ the child of $v$ that is an ancestor of $l$, we have, 
\begin{align*}
    \|\dot{\by}\|_{\rho}
:&=\sum_{v\ne r}\rho_v|\dot y_v|\\
&= 2\sum_{v\succ l} h_{v}\sum_{c\in C_v\setminus\{u(v)\}}\dot y_{c}\tag{$\rho$ defines an ultrametric with heights $h_v$}\\
&=2\sum_{v\succ l}\bigl(y_v-y_{u(v)}\bigr)V_v.\tag{using \eqref{eq:cur-c}}
\end{align*}
Using the bound from Lemma \ref{lem:voltage}, $V_v \leq \frac{y_l}{y_v}V_l$, we have,
\begin{align*}
\|\dot{\by}\|_{\rho}
&\le\frac{2y_l^p}{\varepsilon}
\sum_{v\succ l}\left(1-\frac{y_{u(v)}}{y_v}\right)\nonumber\\
&\le\frac{2y_l^p}{\varepsilon}
\log\!\left(\frac{k}{y_l}\right).
\end{align*}
which is precisely \eqref{eq:elongation}. The last step uses $1-a\le-\log a$, and a telescopic sum.
\end{proof}

Lemma~\ref{lem:geometry} gives $d_u\le2\rho_u$, and $y_l\ge1/2$ by \eqref{eq:invariants}. Since $\dot \Phi=y_l^p$, we can conclude from Lemma \ref{lem:open-circuit} that,
\[
\dot C_{\by}:= y_l + \|\dot{\by}\|_{d}
\le y_l +  \frac{4y_l^p}{\varepsilon}
\log\!\left(\frac{k}{y_l}\right) \le\left(2 +\frac{4\log(2k)}{\varepsilon}\right) \dot \Phi\le \frac{32\log(2k)}{\varepsilon}\dot \Phi,
\tag{Elongation Eq.}\label{eq:physical-elongation}
\]
where we used $y_l \leq 2 y_l^p$ because $y_l\geq 1/2$.

The following lemma will be used in the next section.

\begin{lemma}[Bounding movement by $\dot y_l$]\label{lem:rate2} In an elongation of leaf $l$, if $y_{p_l}\geq 2y_l$, then, 
    \[
    ||\dot \by||_\rho \leq 4\rho_l(-\dot y_l)\log\left(\frac{k}{y_l}\right) \tag{E'}\label{eq:ep}
    \]
\end{lemma}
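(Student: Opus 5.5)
The plan is to reduce \eqref{eq:ep} to the already established bound \eqref{eq:elongation} of Lemma~\ref{lem:open-circuit}. It then suffices to show that, under the hypothesis $y_{p_l}\ge 2y_l$, the quantity $y_l^p/\varepsilon$ is controlled by the current through the elongating edge:
\[
\frac{y_l^p}{\varepsilon}\le 2\rho_l(-\dot y_l).
\]
Plugging this into $\|\dot\by\|_\rho\le \frac{2y_l^p}{\varepsilon}\log(k/y_l)$ gives exactly the constant $4$ in \eqref{eq:ep}.

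To get this inequality, I will use the electrical picture of Lemma~\ref{lem:electrical}. Applying Ohm's law on the edge $(p_l,l)$, with the same sign convention as in the proof of Lemma~\ref{lem:voltage} ($V_v-V_u=r_u\dot y_u$ for $v=p_u$), gives
\[
V_{p_l}-V_l=\frac{\rho_l}{y_l}\dot y_l,
\qquad\text{that is,}\qquad
\rho_l(-\dot y_l)=y_l\,(V_l-V_{p_l}).
\]
Here $V_l=y_l^\varepsilon/\varepsilon$. So it remains to show that $V_{p_l}\le V_l/2$.

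This is where the hypothesis enters, and it is the only nontrivial step. Lemma~\ref{lem:voltage}, applied at the strict ancestor $v=p_l\succeq l$, gives
\[
V_{p_l}\le \frac{y_l}{y_{p_l}}V_l\le \frac12 V_l,
\]
using $y_{p_l}\ge 2y_l$ and $V_l\ge0$. Hence
\[
\rho_l(-\dot y_l)\ge \frac{y_lV_l}{2}=\frac{y_l^{p}}{2\varepsilon},
\]
which is the desired comparison. Combining it with \eqref{eq:elongation} concludes the proof.

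The main care point is the sign bookkeeping in Ohm's law, so that $-\dot y_l\ge0$ matches the lemma's convention on the direction of currents. The voltage bound itself is immediate from Lemma~\ref{lem:voltage}, since $\dot y_{p_l}\le 0$.
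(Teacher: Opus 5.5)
Your proposal is correct and follows the paper's proof essentially verbatim: Lemma~\ref{lem:voltage} at $p_l$ together with $y_{p_l}\ge 2y_l$ gives $V_{p_l}\le V_l/2$, Ohm's law on $(p_l,l)$ then yields $-\dot y_l\ge y_l^p/(2\varepsilon\rho_l)$, and substituting into \eqref{eq:elongation} gives \eqref{eq:ep}. Your sign convention for Ohm's law matches the one used in Lemma~\ref{lem:voltage}, and you correctly note the need for $V_l>0$.
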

\begin{proof}
Since $y_{p_l}\geq 2y_l$, applying Lemma \ref{lem:voltage} at $p_l$ gives $V_{p_l}\leq V_l/2$ and thus, applying Ohm's law in edge $(l,p_l)$, as well as using $V_l = y_l^\varepsilon/\epsilon$ gives,
    \[
    -\dot y_l = \frac{y_l}{\rho_l}(V_l - V_{p_l})\geq \frac{y_l}{2\rho_l}V_l \geq \frac{ y_l^p}{2\varepsilon \rho_l}.
    \]
Finally, invoking
\[
\begin{aligned}
\|\dot{\by}\|_{\rho}
\le\frac{2y_l^p}{\varepsilon}
\log\!\left(\frac{k}{y_l}\right).
\end{aligned}
\tag{E}
\]
we recover \eqref{eq:ep}.
\end{proof}

\subsection{Leaf deletions}\label{sec:deletion}
In this section, we show that \eqref{eq:goal} is preserved during the deletion of a leaf $l$, i.e., that
\[
\boxed{\Delta C_y := 
\OT_{d}\bigl(\by(t^-),\by(t^+)\bigr)
\le\frac{32\log(2k)}{\varepsilon}\Delta \Phi.
}
\tag{Deletion Eq.}\label{eq:deletion}
\]
where $t^-$ and $t^+$ denote the instants immediately before and after the operation and $\Delta \Phi=\Phi(t^+)-\Phi(t^-).$

Without loss of generality, we assume $y_l^-=1/2$ before the deletion. Indeed, when a leaf $l$ is deleted with $y_l^->1/2$, one can first artificially extend the leaf until $y_l^-=1/2$ while controlling the movement cost with \eqref{eq:elong}, and then delete $l$. 

\begin{lemma}[Lower bound on the energy increase]\label{lem:lben}
In the deletion of a leaf $l$, we have,
\begin{equation}
    \Delta \Phi\ge\varepsilon d_l y_l(t^-)^p.
\label{eq:bregman}
\end{equation}
\end{lemma}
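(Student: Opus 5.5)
The plan is to write $\Delta\Phi$ as a Bregman divergence of $\Phi_{T^-}$ and keep only the term coming from the deleted edge. Write $T^-$ and $T^+$ for the trees just before and after the deletion, and $\by^-=\by(t^-)$, $\by^+=\by(t^+)$ for the corresponding minimizers of \eqref{eq:optim}.

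\textbf{Step 1: embed $\by^+$ into $T^-$.} Extend $\by^+$ to a leaf configuration on $T^-$ by setting $y^+_l=0$. Every node of $T^+$ is a node of $T^-$, and with this extension its subtree mass is unchanged. We then check that $\Phi_{T^-}(\by^+)=\Phi_{T^+}(\by^+)$.
\begin{itemize}
\item The deleted edge $(p_l,l)$ contributes $d_l\cdot 0^p=0$.
\item If $p_l$ is suppressed, it has exactly one remaining child $c$. The two edges $(p_{p_l},p_l)$ and $(p_l,c)$ then carry the same mass $y^+_{p_l}=y^+_c$. Their contributions $d_{p_l}(y^+_c)^p+d_c(y^+_c)^p$ therefore equal the contribution of the merged edge of length $d_{p_l}+d_c$.
\end{itemize}
Hence $\Phi(t^+)=\Phi_{T^-}(\by^+)$ and $\Delta\Phi=\Phi_{T^-}(\by^+)-\Phi_{T^-}(\by^-)$.

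\textbf{Step 2: reduce to a Bregman divergence.} Both $\by^+$ and $\by^-$ are nonnegative with total mass $k$. So $\by^+-\by^-$ sums to zero over the leaves of $T^-$. By \eqref{eq:FO}, all leaf partial derivatives of $\Phi_{T^-}$ at $\by^-$ equal the same constant $p\Phi(t^-)/k$. Therefore $\nabla\Phi_{T^-}(\by^-)\cdot(\by^+-\by^-)=0$, and
\[
\Delta\Phi=\sum_{u\ne r} d_u\Bigl[(y_u^+)^p-(y_u^-)^p-p\,(y_u^-)^{\varepsilon}\,(y_u^+-y_u^-)\Bigr],
\]
where the sum runs over the edges of $T^-$. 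Expanding the gradient term edge by edge through $y_u=\sum_{\ell\in\Lcal_u}y_\ell$ yields exactly these linear terms.

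\textbf{Step 3: keep only the deleted edge.} Since $x\mapsto x^p$ is convex on $[0,\infty)$, each bracket is nonnegative, so we may drop every term except $u=l$. For $u=l$ we have $y_l^+=0$, and the bracket becomes
\[
-(y_l^-)^p+p\,(y_l^-)^{\varepsilon}y_l^-=(p-1)(y_l^-)^p=\varepsilon\,(y_l^-)^p.
\]
Multiplying by $d_l$ gives \eqref{eq:bregman}.

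The only point requiring care is Step 1. One must confirm that the optimization problem on $T^+$ is exactly the restriction of the problem on $T^-$ to configurations with $y_l=0$, including the bookkeeping of the suppressed degree-two vertex. Once this is settled, the rest is a one-line convexity argument.
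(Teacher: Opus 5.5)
Your proof is correct and is essentially the paper's argument. You subtract the vanishing first-order term $\langle\nabla\Phi(\by^-),\by^+-\by^-\rangle$, use convexity of $z\mapsto z^p$ to drop every Bregman summand except the deleted edge, and evaluate that edge at $y_l^+=0$ to get $\varepsilon d_l(y_l^-)^p$. Your Step 1 (extending $\by^+$ by $y_l^+=0$ and checking that suppressing a degree-two vertex leaves $\Phi$ unchanged) simply spells out a bookkeeping point the paper leaves implicit.
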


\begin{proof}
$\Delta\Phi=\Phi(\by^+)-\Phi(\by^-)$.
By the first-order condition, since both configurations have total mass $k$, we have $\langle\nabla\Phi(\by^-),\by^+-\by^-\rangle=0$.
Subtracting this zero linear term gives
\[
\begin{aligned}
\Delta\Phi
&=\sum_{u\ne r}d_u\bigl[(y_u^+)^p-(y_u^-)^p-p(y_u^-)^{p-1}(y_u^+-y_u^-)\bigr]\\
&\ge d_l\bigl[-(y_l^-)^p+p(y_l^-)^p\bigr]
=\varepsilon d_l(y_l^-)^p,
\end{aligned}
\]
where we used the convexity of $z\mapsto z^p$, to imply that every summand is nonnegative,$\forall y_1,y_2\geq 0 :
y_1^p - y_2^p -py_2^{p-1}(y_1-y_2)\geq 0$, and we retained only the deleted edge (for which $y_l^+ = 0$).
\end{proof}

\begin{lemma}[Upper bound on the movement]\label{lem:ubmov} Assuming that $y_l = 1/2$ upon deletion
\begin{equation}
\OT_d\bigl(\by(t^-),\by(t^+)\bigr)
\le 4d_l\bigl(\log(2k)+1\bigr).\label{eq:deletion-integral}
\end{equation}
\end{lemma}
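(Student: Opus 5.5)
The plan is to realize the deletion as the limit of a continuous elongation, and to control the movement cost with Lemma~\ref{lem:rate2}. Instead of removing $l$ abruptly, I will artificially elongate the edge $(p_l,l)$ from its current length $d_l$ to $+\infty$. Along this path the fractional minimizer $\by(s)$, with $s$ the current length of $(p_l,l)$, moves continuously. As $s\to\infty$, the mass $y_l(s)$ tends to $0$: otherwise $\Phi$ would blow up, whereas placing all mass off $l$ keeps it bounded. By strict convexity and compactness, $\by(s)$ then converges to the minimizer on the tree with $l$ removed. That limit is exactly $\by(t^+)$, since suppressing the degree-two vertex $p_l$ merges two edges carrying the same mass and changes neither $\Phi$ nor $\OT_d$. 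By the triangle inequality for $\OT_d$ and passage to the limit,
\[
\OT_d\bigl(\by(t^-),\by(t^+)\bigr)\le\int_{d_l}^{\infty}\|\dot\by(s)\|_{d^{(s)}}\,ds .
\]
Here all edges other than $(p_l,l)$ keep their fixed lengths. For the edge $(p_l,l)$ I will charge the original length $d_l$ rather than $s$: the edge is deleted at the end, so only the masses on the remaining edges matter. It therefore suffices to bound the movement on edges $u\ne l$.

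Next I check the hypotheses of Lemma~\ref{lem:rate2} along the path. Throughout the elongation, $y_l(s)\le 1/2$, and by the monotonicity lemma every other leaf mass only increases, so it stays $\ge 1/2$. Since $p_l\ne r$ has at least one other child subtree, of mass $\ge 1/2$, we get $y_{p_l}\ge y_l+\tfrac12\ge 2y_l$. All subtree masses other than $y_l$ lie in $[1/2,k]$, so the comparisons $d_u\le 2\rho_u$ for $u\ne l$ still hold. Lemma~\ref{lem:rate2} then gives
\[
\sum_{u\ne l} d_u|\dot y_u|\le 2\|\dot\by\|_\rho\le 8\rho_l(-\dot y_l)\log\!\left(\frac{k}{y_l}\right).
\]
I will reparametrize the integral by $y=y_l\in(0,1/2]$, which is monotone in $s$. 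The bound then becomes $\int_0^{1/2}8\rho_l\log(k/y)\,dy$, provided $\rho_l$ is uniformly bounded along the path.

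The key and most delicate step is bounding $\rho_l$ by $O(d_l)$, where $d_l$ is the length at time $t^-$, even though $d_l(s)$ itself diverges. Since $l$ is a leaf, $\rho_l=h_{p_l}$. By Lemma~\ref{lem:geometry}, $h_{p_l}$ can be computed along a path from $p_l$ to a leaf $\ell'\ne l$ below $p_l$. That path has fixed edge lengths and masses at most $k$, so $h_{p_l}(s)\le k^\varepsilon d(p_l,\ell')\le 2\,d(p_l,\ell')$. At time $t^-$, the invariant~\eqref{eq:invariants} holds, so~\eqref{eq:geometry} gives $d(p_l,\ell')\le\tfrac43 d_l$. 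Hence $\rho_l(s)\le\tfrac83 d_l$ uniformly in $s$. Combined with $\int_0^{1/2}\log(k/y)\,dy=\tfrac12(\log(2k)+1)$, this yields $\OT_d\le C\,d_l(\log(2k)+1)$.

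The constant $C$ obtained this way is $32/3$, not the $4$ claimed in~\eqref{eq:deletion-integral}. The main obstacle is therefore tightening the constants. I would try two things. First, avoid the crude $k^\varepsilon\le2$ and $d_u\le 2\rho_u$ losses by working directly with $\rho$-distances, which form an ultrametric with heights $h$. Second, bound $\rho_l$ by $h_{p_l}$ evaluated at $t^-$, using that $h_{p_l}$ increases only mildly during the elongation. Any constant suffices for~\eqref{eq:deletion} after combining with Lemma~\ref{lem:lben}, since $\varepsilon d_l 2^{-p}$ times $32\log(2k)/\varepsilon$ leaves ample room.
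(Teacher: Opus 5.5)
\paragraph{Assessment.} Your route is the paper's route: elongate $(p_l,l)$ to infinity, check $y_{p_l}\ge 2y_l$, apply Lemma~\ref{lem:rate2}, bound $\rho_l$ uniformly, and change variables to $y_l$. As written, however, it does not prove the stated inequality, and your reason for why the constant does not matter is wrong. There are two problems.
\begin{enumerate}
\item The bound $\rho_l(s)\le k^\varepsilon d(p_l,\ell')\le\frac83 d_l$ is too crude. It only gives the constant $32/3$.
\item The paper computes $\OT_d(\by(t^-),\by(t^+))$ on $T(t^-)$ with $y_l(t^+)=0$. It therefore contains the deleted edge's own term $d_l\,y_l(t^-)=d_l/2$; the appendix uses exactly this term to get $\Delta C_{\by}\ge d_\ell y_\ell(t^-)$. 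You cannot drop it, and adding it separately on top of the bound for the other edges would overshoot.
\end{enumerate}
There is also no ``ample room'' downstream. With $\Delta\Phi\ge\varepsilon d_l/4$, \eqref{eq:deletion} needs $\OT_d\le 8d_l\log(2k)$. The constant $4$ fits only because $4(\log(2k)+1)\le 8\log(2k)$ holds exactly when $\log(2k)\ge 1$. The constant $32/3$ does not fit: it already fails for $k=2$, even with the exact bound $\Delta\Phi\ge\varepsilon d_l 2^{-p}$. Proposition~\ref{lem:potential} would then need a larger constant.

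\paragraph{Missing step.} It is the second tightening you mention but do not carry out.
\begin{enumerate}
\item Let $\ell'$ be a surviving leaf below a sibling of $l$. By \eqref{eq:FO} on the auxiliary tree, $\rho_l(s)=h_{p_l}(s)=\sum_{u\in\ell'\rightarrow p_l}d_u\,y_u(s)^\varepsilon$.
\item At $t^-$ this equals $d_l\,y_l(t^-)^\varepsilon=2^{-\varepsilon}d_l$, because $y_l(t^-)=1/2$.
\item Along $\ell'\rightarrow p_l$ the masses only increase. They start at no less than $1/2$ and can gain at most the $1/2$ that leaves $l$, so they at most double. Hence $2^{-\varepsilon}d_l\le\rho_l(s)\le d_l$ throughout.
\end{enumerate}
The upper bound replaces your $\frac83 d_l$, with no detour through $k^\varepsilon\le 2$ and \eqref{eq:geometry}. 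The lower bound gives $d_l\le 2\rho_l(s)$, where $d_l$ is the original length. So the deleted edge is covered, together with all other edges, by $\|\dot{\by}\|_d\le 2\|\dot{\by}\|_\rho$. Combining with Lemma~\ref{lem:rate2} gives $\|\dot{\by}\|_d\le 8d_l(-\dot y_l)\log(k/y_l)$. Integrating over $y_l\in(0,1/2]$ yields exactly $4d_l(\log(2k)+1)$.
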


\begin{proof}  
Consider the following optimization problem,
\[
\by(s)=\arg\min_{\substack{\by\ge0\\\sum_{\ell}y_{\ell}=k}}
\bigl\{\Phi_T(\by)+s y_l^p\bigr\},\qquad s\in [0,\infty),
\]
The variable $s\geq 0$ is an interpolation parameter that extends the edge $(p_l,l)$ by $s$ and defines an auxiliary tree $T(s)$ with edge lengths
\[
d_l(s)=d_l(t^-)+s,
\qquad  d_u(s)=d_u(t^-)\quad(\forall u\ne l)
\] 
where we overload the parameter of $d_u(\cdot)$ with $s$. When unparametrized, the notation $d_l$ refers to the original edge length $d_l(t^-)$. The configuration $\by(s)$ is the ordinary minimizer of \eqref{eq:optim} on the tree $T(s)$ and satisfies $\by(0)=\by(t^-)$, and $\lim_{s\rightarrow \infty}\by(s)=\by(t^+)$ because $\Phi(\by(s)) + sy_l(s)^p\leq \Phi(\by(t^+))$ implies that $y_l(s) \rightarrow 0$. Defining $\rho_u(s) =  d_u(s) y_u^\varepsilon(s)$ and observing that $y_{p_l}(s)\geq 2y_l(s)$, we apply Lemma \ref{lem:rate2} which implies, for $s\geq 0$,
\[
    ||\dot \by||_{\rho} \leq 4 \rho_l(s)(-\dot y_l(s))\log\left(\frac{k}{y_l(s)}\right) \tag{E'}.
\]
Furthermore, we will show that throughout the elongation, the following identities remain valid
\begin{equation}
     \forall u: d_u \leq 2 \rho_u(s),  \quad \text{and}\quad \rho_l(s) \le d_l, \label{eq:near-metric}
\end{equation}
despite \eqref{eq:invariants} being violated at $l$. Consequently,
\[
||\dot \by||_{d} \leq 8d_l(-\dot y_l)\log\left(\frac{k}{y_l}\right).
\]
Integrating this quantity between $s= 0$ and $s=\infty$, with the change of variable $-\int_{s=0}^{\infty}\dot y_l\log\left(\frac{k}{y_l}\right)ds = \int_{y_l=0}^{1/2}\log\left(\frac{k}{y_l}\right)dy_l = \frac{1}{2}(\log(2k) + 1)$ gives the desired upper bound
\[\OT_d\bigl(\by(t^-),\by(t^+)\bigr) = \int_{s=0}^{\infty}||\dot \by||_dds\leq 4d_l(\log(2k) + 1).\]

We conclude the proof by justifying \eqref{eq:near-metric}. For every $u\neq l$, throughout the deletion $y_u \geq 1/2$, so we have \[
\forall u\neq l : \rho_u(s) = d_u y_u^\epsilon(s) \geq d_u/2.
\]
It remains to argue about $\rho_l$. We choose a surviving leaf $\ell$ below a sibling of $l$. For every $u\in\ell\rightarrow p_l$, we have $y_u(0)\le y_u(s)\le2y_u(0)$, because the initial mass is at least $1/2$, and a mass of at most $1/2$ is displaced. Applying \eqref{eq:FO} in the auxiliary tree gives for all $s\geq 0$,
\[
 \rho_l(s) = \sum_{u\in\ell\rightarrow p_l}\rho_u(s),
\]
and the right-hand side varies within a factor of $2^\epsilon$, and is equal to $\rho_l(s=0) = \frac{1}{2^\epsilon}d_l$ at initialization, thus,
\[
\frac12d_l
\le\rho_l
\le d_l.
\]
\end{proof}

Using \eqref{eq:bregman} for $y_l = 1/2$ gives $\Delta  \Phi\ge\varepsilon d_l(1/2)^p\ge\varepsilon d_l/4$. Therefore, the right-hand side of \eqref{eq:deletion-integral} is at most
\[
\frac{16}{\varepsilon}\bigl(\log(2k)+1\bigr)\Delta \Phi
\le\frac{32\log(2k)}{\varepsilon}\Delta \Phi,
\]
which proves \eqref{eq:deletion}.

\subsection{Leaf forks and finishing the proof of Proposition \ref{lem:potential}}\label{sec:forks}

Finally, we consider a fork of leaf $l$ at time $t^-$, creating $m$ children with common
edge length $\delta$ at time $t^+$. We interpolate this operation by first
splitting the fractional mass equally among the new children
at edge length zero, then increasing their common edge length
$s$ from $0$ to $\delta$. By symmetry, the new children have equal fractional masses. The initial split (of length zero) has zero cost and preserves $\Phi$. By the choice of $\delta$, every leaf has mass at least $1/2$
throughout the interpolation; therefore,  integrating \eqref{eq:elong} for $s\in(0,\delta]$ we obtain,

\[
\boxed{\Delta C_y := 
\OT_{d}\bigl(\by(t^-),\by(t^+)\bigr)
\leq \frac{32\log(2k)}{\varepsilon}\Delta \Phi.
}
\tag{Fork Eq.}\label{eq:fork}
\]

\begin{proof}[Proof of Proposition \ref{lem:potential}]
The three equations \eqref{eq:physical-elongation}, \eqref{eq:deletion}, and \eqref{eq:fork} show that \eqref{eq:goal} is preserved throughout a run of the continuous tree-mining game, therefore finishing the proof of Proposition~\ref{lem:potential}.
\end{proof}


\section*{Conclusion}\label{sec:conclusion}
We showed that asynchronous collective tree exploration admits a regret guarantee of the form
\[
\frac{2n}{k}+O\!\left(\log^2(2k)D\right),
\]
matching the $\Omega(\log^2 k)$ lower bound of \cite{cosson2025asynchronous}.
The proof uses a multiscale power regularizer to define a continuous configuration, which is rounded into a tree-mining strategy. Because of the tree-traversal reduction described in \cite{cosson2024collective}, the approach incidentally also yields a new $O(w^2)$-competitive algorithm for width-$w$ layered graph traversal, matching the tight guarantee of \cite{bubeck2022shortest,bubeck2023randomized}, albeit with a power regularizer.

\paragraph{AI disclosure.}
Work on this project began in November 2023, immediately after we
obtained a $O(k)$ regret guarantee using a quadratic
regularizer ($\epsilon=1$)~\cite{cosson2024collective}.
In 2024, we established the $\Omega(\log^2 k)$ lower bound
for asynchronous collective tree exploration subsequently which appears
in~\cite{cosson2025asynchronous}. We then obtained an
$O(\log^3 k)$ competitive guarantee for synchronous collective tree
exploration using an entropic regularizer (corresponding to the
limit $\epsilon\to0$). We developed this result into a complete
manuscript that remained unpublished; an intermediate version of which appeared
in Chapter~8 of RC's thesis, defended in July
2025~\cite{cosson2025collective}. All of these results, including
the first polylogarithmic competitive ratio for collective tree
exploration, were obtained without AI assistance. In September~2026, discussions with OpenAI's Astra model led to
the $O(\log^2 k)$ regret bound presented in this paper,
whose dependence on $k$ is optimal in the asynchronous setting. 
The approach builds on our earlier framework and follows the
overall structure of~\cite{cosson2024collective}. A key
contribution of the model was to propose the intermediate scaling
$\epsilon=\Theta(1/\log k)$, rather than the entropic limit,
which led to unexpectedly simpler computations.
After verifying the correctness of the approach, we revised the proof substantially to simplify and clarify
its exposition.

\newpage
\appendix
\section{Proof of Proposition \ref{lem:rounding}}
\label{sec:rounding}

\roundinglemma*

\begin{proof}
The repair rule is well-defined because the root satisfies $x_r = y_r$, so the stopping ancestor \(v\) exists. Its child $u$ towards $\ell$ satisfies $x_u\geq y_u+3/2$ whereas $x_v< y_v+3/2$, there must exist a child $c$ of $v$ with $x_c<y_c$, and the path to the leaf $\ell'$ is well-defined. 

The algorithm preserves \eqref{eq:invariants}. During elongation of leaf $l$, only $y_l$ decreases, and by the time it reaches $1/2$, a repair
enforces $x_l=1$, blocking further elongation. Deletions increase every surviving fractional leaf mass. In forks, $\delta>0$ is chosen sufficiently small so as to maintain \eqref{eq:invariants}.
This is possible because, denoting by $m$ the number of children of $\ell$,
\[
\frac{y_\ell}{m}>\frac34,
\qquad
\left\lceil\frac{x_\ell}{m}\right\rceil-\frac{y_\ell}{m}<\frac54,
\]
using $m\le x_\ell-1$ and $x_\ell-y_\ell<3/2$.
These strict margins persist for sufficiently small $\delta$
by continuity of the fractional minimizer.

The proof of \eqref{eq:ybx} relies on a potential function argument. For a tree $T$ and configurations $\bx,\by$, define
\[
P(T,\bx,\by)
=8\sum_{u\ne r}d_u\left(x_u-y_u-\frac12\right)_+,
\qquad
P(t)=P(T(t),\bx(t),\by(t)).
\]
This potential is nonnegative and vanishes when $\bx=\by$.
Since the positive-part function is $1$-Lipschitz, changing either
configuration changes $P$ by at most eight times the movement cost. We will show that the following inequality, which implies \eqref{eq:ybx},
\[
C_{\bx}(t)+P(t)\le128C_{\by}(t), \tag{Pot}
\]
remains true during a) repair b) elongation c) fork d) deletion. 
Summing their contributions and using
$C_{\bx}(0)=C_{\by}(0)=P(0)=0$ proves the lemma.

\textit{a) Repair.} Every edge on the upward path $\ell\rightarrow v$ is overloaded by at least $3/2$,
so moving one miner up along edge $(p_u,u)$ decreases $P$ by exactly
$8d_u$. Every edge on the downward path $\ell' \rightarrow v$ is underloaded, so adding
one miner increases its contribution by at most $4d_u$.
Using \eqref{eq:geometry},
\[
\begin{aligned}
\Delta C_x(t)+\Delta P
&\le d(\ell,\ell')-8d(\ell,v)+4d(v,\ell')\\
&=-7d(\ell,v)+5d(v,\ell')\\
&\le-\frac13d(\ell,v)<0 = \Delta C_y(t).
\end{aligned}
\]

\textit{b) Elongation.} During an elongation $\dot C_y = y_\ell + \|\dot{\by}\|_d$, and the potential increase satisfies,
\[
\dot P
\le8\left(x_\ell-y_\ell-\frac12\right)_+
+8\|\dot{\by}\|_d.
\]
Since $x_\ell\le4y_\ell$, $y_\ell\ge1/2$, and
$x_\ell-y_\ell\le3/2$,
\[
\begin{aligned}
\dot C_{\bx}+\dot P
&\le x_\ell
 +8\left(x_\ell-y_\ell-\frac12\right)_+
 +8\|\dot{\by}\|_d\\
&\le20y_\ell+8\|\dot{\by}\|_d
\le20\dot C_{\by}.
\end{aligned}
\]

\textit{c) Fork.} During a fork, the split changes neither
potential nor cost, then the elongation to length $\delta$ is covered by the elongation analysis.

\textit{d) Deletion.} During a deletion of $\ell$, the deleted edge alone leads to
\[
\Delta C_{\by}
=\OT_d\bigl(\by(t^-),\by(t^+)\bigr)
\ge d_\ell y_\ell(t^-).
\]
By \eqref{eq:geometry}, the cost of evacuating the miners to the chosen leaf
below a sibling is bounded by
\[
3d_\ell x_\ell(t^-)
\le12d_\ell y_\ell(t^-)
\le12\Delta C_{\by}.
\]
Evacuation also increases $P$ by at most $8$ times this quantity, and updating the
fractional configuration increases $P$ by at most
$8\Delta C_{\by}$, hence
\[
\Delta C_{\bx}+\Delta P
\le9\times 12 \Delta C_{\by} +8\Delta C_{\by}
\le116\Delta C_{\by}.
\]
Subsequent repairs have nonpositive amortized cost.
\end{proof}

\bibliographystyle{alpha}
\bibliography{biblio}

\end{document}

%% file: packages.tex
\usepackage[utf8]{inputenc} 
\usepackage{url}            
\usepackage{booktabs}       
\usepackage{amsfonts}       
\usepackage{nicefrac}       
\usepackage{microtype}      

\usepackage{amsthm}
\usepackage{amsmath}
\usepackage{amssymb}
\usepackage{comment}        
\usepackage{enumitem}  
\usepackage{cleveref} 

\usepackage{thmtools}

\newtheorem{theorem}{Theorem}[section]

\newtheorem*{theorem*}{Theorem}
\newtheorem{proposition}[theorem]{Proposition}

\newtheorem*{remark*}{Remark}
\newtheorem{lemma}[theorem]{Lemma}